\newtheorem{proposition}{Proposition}
\begin{document}
\title{Throughput of TCP over Cognitive Radio Channels \footnote{Preliminary versions of this paper were presented in Globecom 2015.}}
\author[1]{Sudheer Poojary}
\author[2]{Akash Agrawal}
\author[3]{Bhoomika Gupta}
\author[1]{Archana Bura}
\author[1]{Vinod Sharma}
\affil[1]{Department of ECE, Indian Institute of Science, Bangalore, India}
\affil[2]{Electrical and Electronics Engineering, BITS Pilani, K.K. BIRLA Goa campus, Goa, India}
\affil[3]{Electronics and Communication Engineering, Sardar Vallabhbhai National Institute of Technology, Surat, India}
\maketitle

\begin{abstract}
In this paper, we study the performance of a TCP connection over cognitive radio networks. In these networks, the network may not always be available for transmission. Also, the packets can be lost due to wireless channel impairments. We evaluate the throughput and packet retransmission timeout probability of a secondary TCP connection over an ON/OFF channel. We first assume that the ON and OFF time durations are exponential and later extend it to more general distributions. We then consider multiple TCP connections over the ON/OFF channel. We validate our theoretical models and the approximations made therein via ns2 simulations.

\end{abstract} 

\begin{keywords}
TCP, ON-OFF channels, cognitive radio.
\end{keywords}
\section{Introduction}
In this paper, we analyze performance of TCP connections over channels which are not always available for transmission. We call such channels ON-OFF channels. We limit ourselves to TCP Reno. On wireless channels misinterpretation of random packet losses as congestion losses can lead to poor performance of TCP. Besides these losses, in certain scenarios such as cognitive radio networks, the channel may not always be available \cite{Ghosh2014},\cite{Tadayon2015}. These lead to frequent timeouts and the TCP performance is adversely affected.

The primary motivation for this work is cognitive radio (CR) networks \cite{Mitola1999}. Cognitive radio is an emerging technology which intends to use the spectrum more efficiently. Huge portions of the spectrum are licensed but lay underutilized. Cognitive radio permits devices called secondary devices to utilize the channel when the primary user of the channel (the user to whom the channel is licensed) is not utilizing the channel. One can model this behaviour with an alternating ON-OFF channel, with ON periods corresponding to the primary being inactive and hence the channel being available for transmission by secondary users and OFF durations correspond to the primary busy periods where the channel is unavailable to the secondary for transmission. A practical example of TCP over an ON-OFF channel is cellular data boost \cite{Biglieri2012} in cellular networks where non-real time or delay tolerant traffic such as email, FTP etc, (which use TCP) can be  offloaded from the cellular network to a cognitive radio network of white space hot-spots to meet the QoS requirements of delay sensitive traffic. Besides CR networks, there are other networks where the links can be modeled as ON-OFF channels. Intermittent loss of connectivity also happens in cellular networks due to hand-offs, in mobile ad-hoc networks \cite{Mezzavilla2014} due to link failures, in satellite networks \cite{rfc2488} and in 802.11 networks due to collisions. 

The performance of TCP has been widely studied in the literature. In \cite{Mathis1997}, the authors develop a performance model for TCP and verify their model through simulations and measurements. In \cite{Padhye2000}, the authors provide an expression for TCP Reno throughput under Bernoulli packet losses using a Markov model. In \cite{Fu2003}, the authors study the performance of TCP over multi-hop wireless channel. They observe that link-layer contention causes packet drops and propose changes to the backoff mechanism of the link layer to improve TCP performance.

The performance of secondary users in a CR network has been studied in \cite{Ghosh2014} and \cite{Tadayon2015}. In \cite{Ghosh2014}, the authors compute the channel availability probability and the throughput for CR nodes in MIMO CR networks. In \cite{Tadayon2015}, the authors use Markov models to derive the stability condition for the secondary users in a multi-channel CR network. In \cite{Akyildiz2009}, the authors discuss properties and research challenges posed by cognitive radio. They suggest changes that need to be incorporated into the transport layer protocols for operation over CR channels. In \cite{Slingerland2007}, the authors compare the performance of TCP SACK, TCP New Reno and TCP Vegas over dynamic spectrum access links using ns2 simulations. In \cite{Felice2011}, the authors study the effect of spectrum sensing duration, primary user interference and channel bandwidth variation on different TCP variants using simulations. In \cite{Chen2012}, the authors use relay selection, power allocation and adaptive modulation and coding schemes to improve secondary user TCP performance over CR channels. In \cite{Wang2011}, the authors look at the impact of secondary sensing and primary activity on TCP throughput. In \cite{Kartheek2011}, the authors model the system as a M/G/1 queue with the primary users getting priority over the secondary users and provide expressions for throughput for data traffic and mean delay for voice traffic of secondary users. Transport layer protocols for cognitive radio have been developed in \cite{Sarkar2010, Kim2011, Chowdhury2013, Tsukamoto2015}. 

From above we see that there is considerable literature on performance analysis of secondary users in a CR network. However most analytical results address the problem at the MAC layer ignoring the impact of TCP dynamics and the studies of TCP behaviour are mostly simulation-based. In this paper, we provide a theoretical model for TCP connections over a CR channel. We compute the throughput and probability of retransmission timeout for a secondary TCP connection. Our work complements \cite{Kartheek2011}. In \cite{Kartheek2011}, the authors consider the case where ON and OFF durations are of the order of round trip time (RTT)  and hence they ignore TCP timeouts. In our work, we consider the case where the ON and OFF durations are larger than RTT where the effect of TCP timeouts cannot be ignored. Such a scenario can often happen in CR networks and then the timeouts can significantly affect the TCP throughput. We note that \cite{Wang2011} also considers the impact of primary user activity on TCP throughput due to timeouts. However they consider a slotted model, where the primary user if active at the beginning of the slot stays active for the entire slot and if inactive stays inactive for the entire slot. We do not consider a slotted system and consider more general primary user behaviour.

This paper is organized as follows. In Section \ref{sec:systemModel}, we describe our system model. In Section \ref{sec:perfAnalysis}, we develop a Markov model for TCP behaviour in an ON-OFF channel with exponential ON and OFF periods. In Section \ref{sec:phase_type}, we develop models for ON and OFF periods with more general distributions using regenerative theory. In Section \ref{sec:multipleTCP}, we consider the system with multiple secondary TCP connections. Section \ref{sec:conclusion} concludes this paper.
\section{System Model}
\label{sec:systemModel}

\begin{figure}[t]
  \centering
  \includegraphics[scale = 0.7]{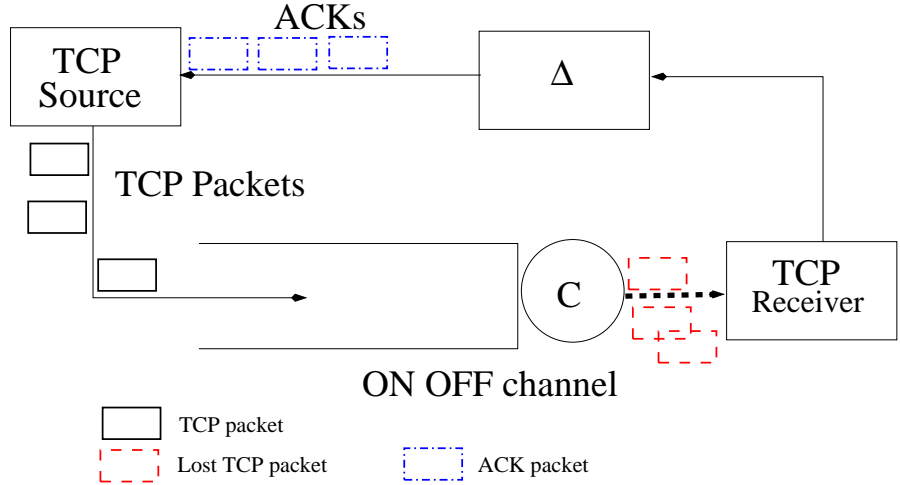}
  \caption{TCP connection through ON/OFF channel}
  \label{fig:systemModel1}
\end{figure}   

We consider a single TCP New Reno flow going through an ON-OFF channel as shown in Figure \ref{fig:systemModel1}. In the Figure, the packets in red are TCP packets lost due to the channel going OFF. As the channel alternates between ON and OFF, the TCP receiver receives packets intermittently. The TCP flow could be of a secondary device in a cognitive radio network using opportunistic spectrum sharing so that the OFF periods correspond to the primary using the channel and the ON periods correspond to the time when the channel is available for the secondary user. The overall RTT of the TCP flow is fixed and equals $R$ seconds. (This corresponds to negligible queuing. We will extend this assumption later.) Any transmission (of a window load of packets) when the channel is ON goes through, although the packets may experience transmission error. The next window of packets will be transmitted after RTT, i.e., $R$ secs. However a transmission attempted during an OFF period results in loss of all the packets of that window and hence causes a retransmission timeout (RTO) and the TCP source has to wait for its RTO timer to expire before it can attempt the next retransmission. It is possible that the channel becomes OFF while a secondary transmission is going on. Such an event is likely if the average ON and OFF periods are of the order of RTT or lesser. This case has been studied in \cite{Kartheek2011}. Here we consider the case where the average ON and OFF periods are larger than the RTT of the flow with high probability. 

If during the ON period, the transmission error probability is small, after an error the next transmitted packets (i.e., packets in the window succeeding the packet in error) may be received successfully causing transmission of duplicate ACKs. In response to these losses, the secondary TCP reduces its window size. In Figure \ref{fig:systemModel2}, we show the window size evolution for a TCP flow over an ON-OFF channel. In the ON period, the TCP window size increases, however random losses over the channel cause reductions in the window size even during the ON periods. 

Let $S_k \in \{0,1\}$ be the channel state at the $k^{th}$ transmission of a window load of packets, where $1$ corresponds to the channel being ON and $0$ corresponds to the channel being OFF. Let $J_k \in \mathcal{D}$ denote the duration between the $k^{th}$ and the $(k+1)^{st}$ transmission of a window load of packets. The duration between successive transmissions in the ON period is equal to the RTT, $R$, of the flow. However, on encountering a timeout, TCP uses a binary exponential back-off strategy. The first RTO is set to $M = \max \{R, T_{min}\}$, where $T_{min}$ is the minimum value that a timeout duration can be. If a failed transmission (one which leads to a timeout) is immediately followed by another, TCP doubles the timeout duration. The timeout duration is bounded above by $T_{max}$. Therefore $\mathcal{D} = \{R, M, 2M, 4M, \cdots, T_{max} \}$. The TCP timeout mechanism is  illustrated in Figure \ref{fig:systemModel2}.

\begin{figure}[t]
  \centering
  \includegraphics[scale = 0.2]{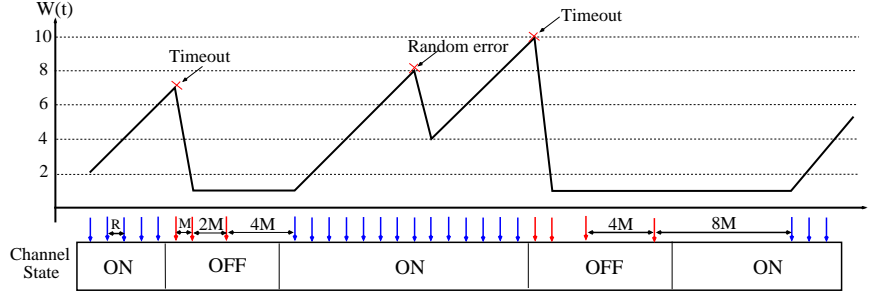}
  \caption{Cognitive radio channel with one secondary TCP source.}
  \label{fig:systemModel2}
\end{figure}

Let $W_k$ be the window size at the end of the $k^{th}$ transmission of a window of packets and $H_k$ be the value for the corresponding slow start threshold of TCP. The window evolution of TCP New Reno is as follows. If there is no packet loss between the $k^{th}$ transmission and the $(k+1)^{st}$ transmission of windows, we have
\begin{equation}
\begin{split}
\label{eqn:Reno_CA}
W_{k+1} &= W_k + 1, \text{ if }  W_k \geq H_k  \\
&= 2 W_k,    \text{ if } W_k < H_k. 
\end{split}
\end{equation}
If there is a packet loss, TCP retransmits the lost packet and reduces the window size. If TCP detects the loss through duplicate ACKs, then it reduces the window size by half and sets the slow start threshold to that value. This is called recovery through \textit{fast retransmit}\cite{rfc2581}. In that case,
\begin{equation}
\label{eqn:Reno_FR}
W_{k+1} = \frac{W_k}{2}, \text{ } H_{k+1} = \frac{W_k}{2}.
\end{equation}
If the loss is detected through a timeout (this will happen probably because channel is OFF), we have
\begin{equation}
\label{eqn:Reno_TO}
W_{k+1} = 1, \text{ } H_{k+1} = \frac{W_k}{2}.
\end{equation}
The TCP window size is usually restricted by the buffer size available at the receiver. Considering this, the window size $W_k$ and the threshold $H_k$ are restricted to $W_{max} < \infty$.

In the rest of the paper, we develop theoretical models for TCP over an ON-OFF channel and compute the probability of retransmission timeout and the throughput and compare our results to ns2 simulations. We then show how these results can be used when there are multiple TCP connections.
\section{Analysis for Exponential ON-OFF}
\label{sec:perfAnalysis}
We assume that the OFF and ON periods are i.i.d. exponential with parameters $\lambda_0$ and $\lambda_1$ respectively. This is reasonable as the PU activity is usually modeled as exponential \cite{Akyildiz2009}. We will generalize these assumptions in Section \ref{sec:phase_type}.

We denote the state of the system at the beginning of the $k^{th}$ transmission of a window workload by $(S_k, J_k, W_k, H_k)$. Since we assume that the ON and OFF durations are exponential, the process, $\{(S_k, J_k, W_k, H_k)\}$ forms a finite state, discrete time Markov chain. In Figure \ref{fig:markovChain_tpm}, we illustrate the single-step transitions from generic ON and OFF states. 

Let $S(t)$ be the state of the channel at time $t$ with $S(t) = 0$ if channel is OFF and $1$ if it is ON. In the ON period, packets can be dropped due to losses on the wireless channel with probability $p$ independently of others. Let $P_t(i,j)$ be the probability of channel state $S(t)$ at time $t$ being in state $j$ given that it was in state $i$ at time $0$. Once we know $P_t(i,j)$, we can find the transition probabilities for the Markov chain $\{(S_k, J_k, W_k, H_k)\}$. Proposition \eqref{prop:transition_probs} provides $P_t(i,j)$ explicitly via renewal theory.

\begin{figure}[t!]
  \centering
\includegraphics[scale=0.6]{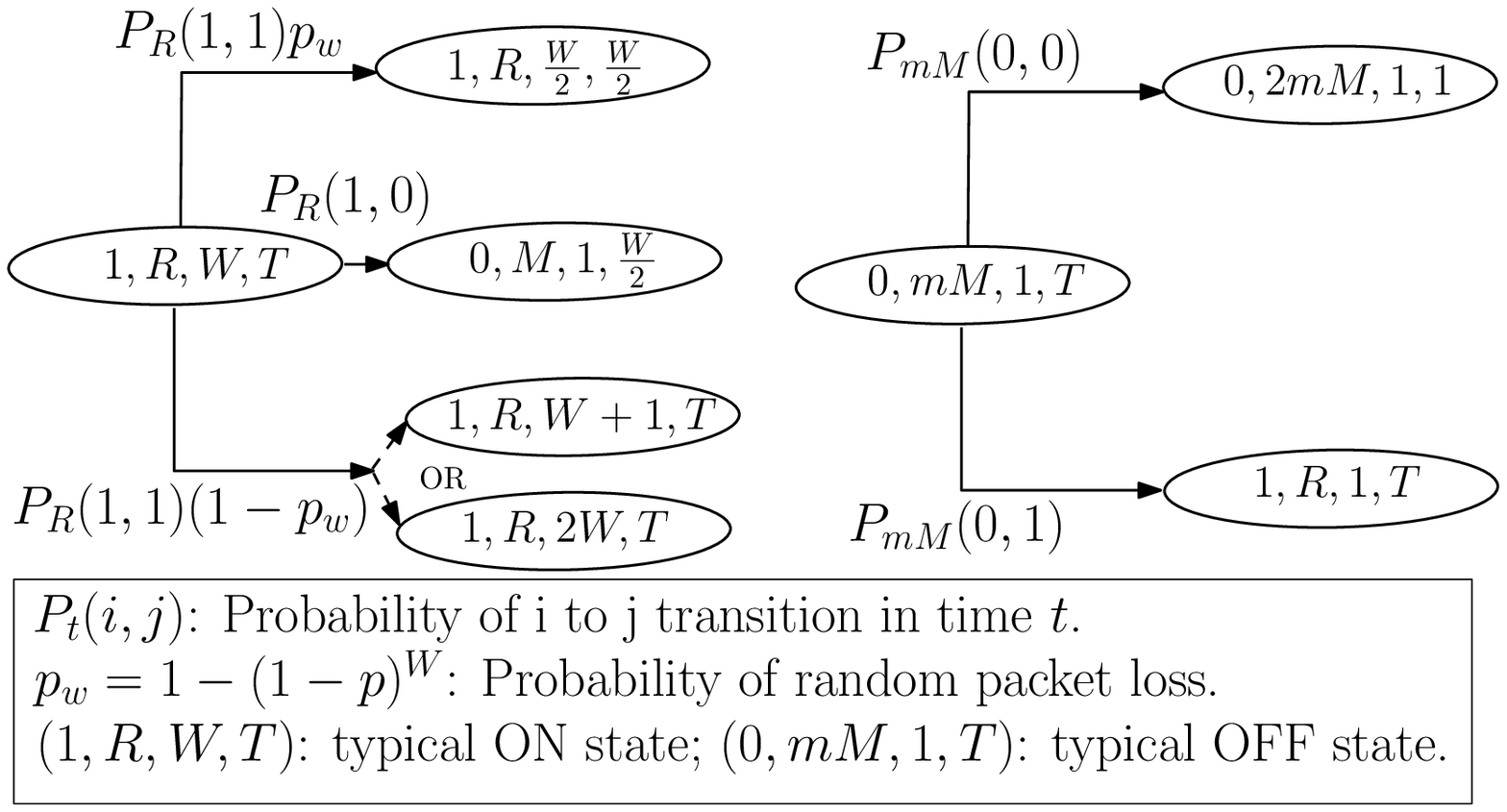}
  \caption{Single-step transitions for the $\{S_k, J_k, W_k, H_k\}$ Markov chain.}
  \label{fig:markovChain_tpm}
\end{figure}

\begin{proposition}
\label{prop:transition_probs}
We have for $i \neq j$,
\begin{equation}
\label{eqn:b2b}
P_t(i,i) = \frac{\lambda_j + \lambda_i e^{-(\lambda_0 + \lambda_1)t}}{\lambda_0 + \lambda_1}.
\end{equation}
\end{proposition}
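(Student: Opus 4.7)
The plan is to exploit the memoryless property of the exponential sojourn times by doing first-step analysis on the first state transition, thereby obtaining a pair of coupled renewal equations for the transition probabilities, and then solving them via Laplace transforms. Writing $p(t) \defeq P_t(i,i)$ and $q(t) \defeq P_t(j,i)$ with $j \neq i$, I would condition on whether the first jump out of state $i$ has occurred by time $t$. Since the sojourn in state $i$ is $\mathrm{Exp}(\lambda_i)$, this yields
\begin{equation*}
p(t) = e^{-\lambda_i t} + \int_0^t \lambda_i e^{-\lambda_i s}\, q(t-s)\, ds,
\end{equation*}
and, by the symmetric argument starting from $j$ (the chain must jump out of $j$ at some $s \le t$, and then the remaining time $t-s$ must return it to $i$),
\begin{equation*}
q(t) = \int_0^t \lambda_j e^{-\lambda_j s}\, p(t-s)\, ds.
\end{equation*}

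Next I would take Laplace transforms of both equations. Letting $\tilde p, \tilde q$ denote the transforms, the convolutions turn into products: $\tilde p(s) = \frac{1}{s+\lambda_i} + \frac{\lambda_i}{s+\lambda_i}\tilde q(s)$ and $\tilde q(s) = \frac{\lambda_j}{s+\lambda_j}\tilde p(s)$. Substituting the second into the first and collecting terms, the denominator simplifies because $(s+\lambda_i)(s+\lambda_j) - \lambda_i\lambda_j = s(s+\lambda_0+\lambda_1)$, giving
\begin{equation*}
\tilde p(s) = \frac{s+\lambda_j}{s\,(s+\lambda_0+\lambda_1)}.
\end{equation*}

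Finally, a partial fraction decomposition $\tilde p(s) = \frac{A}{s} + \frac{B}{s+\lambda_0+\lambda_1}$ with $A = \lambda_j/(\lambda_0+\lambda_1)$ and $B = \lambda_i/(\lambda_0+\lambda_1)$ inverts immediately to the claimed expression. The only step requiring any care is the algebraic cancellation that produces the factor $s(s+\lambda_0+\lambda_1)$ in the denominator of $\tilde p(s)$, without which the partial fraction step would not line up with a clean exponential; the rest is routine bookkeeping. As a sanity check one can also derive the same formula from the Kolmogorov forward equation $P'(t) = P(t)Q$ with generator $Q = \bigl(\begin{smallmatrix}-\lambda_0 & \lambda_0\\ \lambda_1 & -\lambda_1\end{smallmatrix}\bigr)$, which has eigenvalues $0$ and $-(\lambda_0+\lambda_1)$, matching the exponent above.
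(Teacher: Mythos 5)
Your proof is correct and takes essentially the same route as the paper: a renewal-type integral equation for the transition probability, solved by Laplace transform and partial fractions, yielding exactly the decomposition $\frac{\lambda_j}{s(\lambda_0+\lambda_1)} + \frac{\lambda_i}{(\lambda_0+\lambda_1)(s+\lambda_0+\lambda_1)}$. The only cosmetic difference is that you condition on the first jump and carry the cross-probability $q(t)=P_t(j,i)$ as an intermediate unknown, whereas the paper conditions on completion of a full cycle $Y=X_0+X_1$ and writes a single renewal equation; after substituting $\tilde q$ into $\tilde p$ your system collapses to the paper's equation, since $\frac{\lambda_i}{s+\lambda_i}\cdot\frac{\lambda_j}{s+\lambda_j}$ is precisely $\hat{f}_Y(s)$.
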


\begin{proof}
Let $X_i$ be exponentially distributed with parameter $\lambda_i$, $i = 0,1$ and $X_0$ independent of $X_1$. Let $Y = X_0 + X_1$. Let us denote the density function of $Y$ by $f_Y$. By renewal theory arguments \cite{Wolff1989}, we have,
\begin{equation}
\label{eqn:B2B_renewal_arg}
P_t(0,0) = \mathbb{P}(X_0 > t) + \int_{0}^{t}  P_{t-x}(0,0) f_Y(x) dx.
\end{equation}
Taking Laplace transform, we have
\begin{equation}
\label{eqn:B2B_Laplace}
\hat{P}_s(0,0) = \frac{1}{\lambda_0 + s} + \hat{P}_s(0,0) \hat{f}_Y(s).
\end{equation}
The Laplace transform, $\hat{f}_Y$ of $f_Y(t)$ is given by,
\begin{equation}
\label{eqn:Laplace_sum_expo}
\hat{f}_Y(s) = \frac{\lambda_0 \lambda_1}{(\lambda_0 + s) (\lambda_1 + s)}.
\end{equation}
From equations \eqref{eqn:B2B_Laplace} and \eqref{eqn:Laplace_sum_expo}, we have
\begin{equation}
\label{eqn:Laplace_P_B2B}
\hat{P}_s(0,0) = \frac{\lambda_1}{\lambda_0 + \lambda_1} \frac{1}{s} + \frac{\lambda_0}{\lambda_0 + \lambda_1} \frac{1}{s + \lambda_0 + \lambda_1}.
\end{equation}
Taking inverse Laplace transform, 
\begin{equation}
\label{eqn:P_B2B}
P_t(0,0) = \frac{\lambda_1}{\lambda_0 + \lambda_1} + \frac{\lambda_0}{\lambda_0 + \lambda_1} e^{-(\lambda_0 + \lambda_1)t}.
\end{equation}
Similarly we can derive expression for $P_t(1,1)$.
\end{proof}

The Markov chain, $\{(S_k, J_k, W_k, H_k)\}$ is finite. Therefore it has at least one positive recurrent communicating class. Suppose the channel is ON. Then once the ON period ends, an OFF period large enough so that we have at least two timeouts, followed by an ON period, large enough to accommodate one transmission attempt, ensures that the state $(1, R, 1, 1)$ is hit. Similarly, when the channel is OFF, an OFF period large enough so that we have a total of at least two timeouts, followed by an ON period, large enough to accommodate one transmission attempt, ensures that the state $(1, R, 1, 1)$ is hit. Thus, the state $(1, R, 1, 1)$ can be reached from any state in the state space with positive probability. Therefore the state $(1, R, 1, 1)$ is positive recurrent and any state that can be reached from $(1, R, 1, 1)$ is positive recurrent and the remaining states are transient. Also if the probability of packet loss during the ON period is greater than $0$ (which happens if we assume that packets are in error, independently when transmitted during ON periods, e.g., for wireless channels), the state $(1, R, 1, 1)$ has a self loop. Therefore the Markov chain is aperiodic and has a unique stationary distribution $\pi$. Also, starting from any initial state the chain converges exponentially to the stationary distribution in total variation.

The probability of retransmission timeout, $P_o$, i.e., the fraction of packets that are timed out is 
\begin{equation}
\label{eqn:PKT_RTO}
P_o = \frac{\mathbb{E}_{\pi}[1_{\{S = 0\}}]}{\mathbb{E}_{\pi}[W]},
\end{equation}
where $1_{ \{ A \}}$ is an indicator function of set $A$ and $E_{\pi}$ denotes mean under stationarity. The throughput, $\lambda$ (in packets/sec) of the TCP connection, using Palm calculus \cite{Asmussen}, is
\begin{equation}
\label{eqn:T_palmcalculus}
\lambda = \frac{\mathbb{E}_{\pi}[W 1_{ \{S = 1 \} }]}{\mathbb{E}_{\pi}[D]}.
\end{equation}

\subsection{Extension to Channels with Non-negligible Queuing}
The above analytical model assumes that the round trip time for the secondary TCP connection is constant. If the queuing is non-negligible, the model may not be accurate. In that case, we can use the above model with a minor modification. In the state space of the process $\{S_k, J_k, W_k, H_k\}$, if the channel is ON at the end of the $k^{th}$ transmission, i.e., if $S_k = 1$, we set $J_k = \max\{\Delta, \frac{W_k}{\mu}\}$, where $\Delta$ is the constant component of the round trip time which includes the propagation delay and processing delays at the nodes and $\mu$ is the link speed of the channel in packets/second. Such an approximation has been used before in \cite{Bonald1999}, \cite{Blanc2009}. Our simulation results below justify the approximations made in this analytical model.

\subsection{Simulation Results}
\label{sec:simulationResults}
We now compare the probability of timeout and throughput obtained from our model with ns2 simulations. We have modified ns2 code so as to simulate an ON-OFF channel. We generate a sequence of alternate ON and OFF periods and drop all packets that arrive in the OFF period. The packet sizes are $1050$ bytes. We set the link speed of the ON-OFF channel to $5$ Mbps. The other links that the flow traverses have link speeds $1$ Gbps. In practice, the ON-OFF channel could be a wireless link connecting a wireless device to an access point or a base station. The base station/access points are then connected to the Internet through well-provisioned optical fiber links.

We denote the fraction of time that the channel is OFF by $\alpha$. We plot the probability of retransmission timeout (RTO), $P_o$ and the secondary TCP throughput in Figure \ref{fig:Bexp_Iexp_eff_RTT}.
We set $\alpha = 1/3$ and vary the average channel OFF duration, $\mathbb{E}[Y_{off}]$. The probability of packet transmission error in the ON period is set to $0.01$. The maximum window size, $W_{max}$ is $100$ packets. We see that as RTT increases, the probability of retransmission timeout increases and throughput decreases. Our model results match well with ns2 simulations and the errors are less than $5\%$ in most cases. The errors are larger when the average ON and OFF durations are of the order of RTT. However, even for these cases the errors are less than $11\%$. Our analytical model results show that in an ON-OFF channel we can not approximate TCP throughput by multiplying the throughput expressions for TCP Reno (which may be found in \cite{Padhye2000}) by $(1 - \alpha)$ where $\alpha$ is the fraction of time that the channel is busy.

%

\begin{figure}
\centering
\begin{tabular}{c}
\includegraphics[scale=0.24, trim = 65 10 140 5, clip=true]{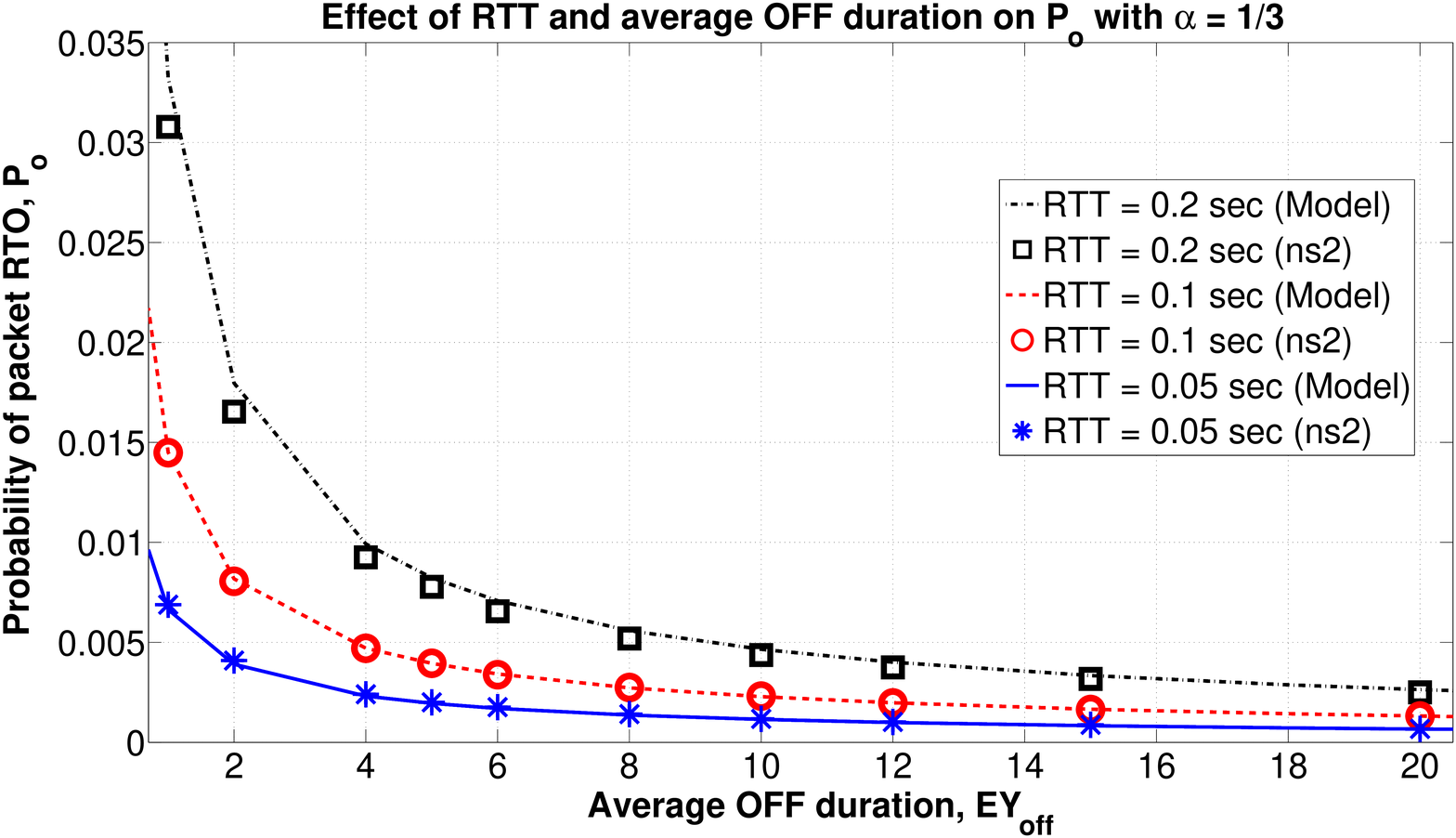} \\ \\ \\
\includegraphics[scale=0.24, trim = 65 10 140 5, clip=true]{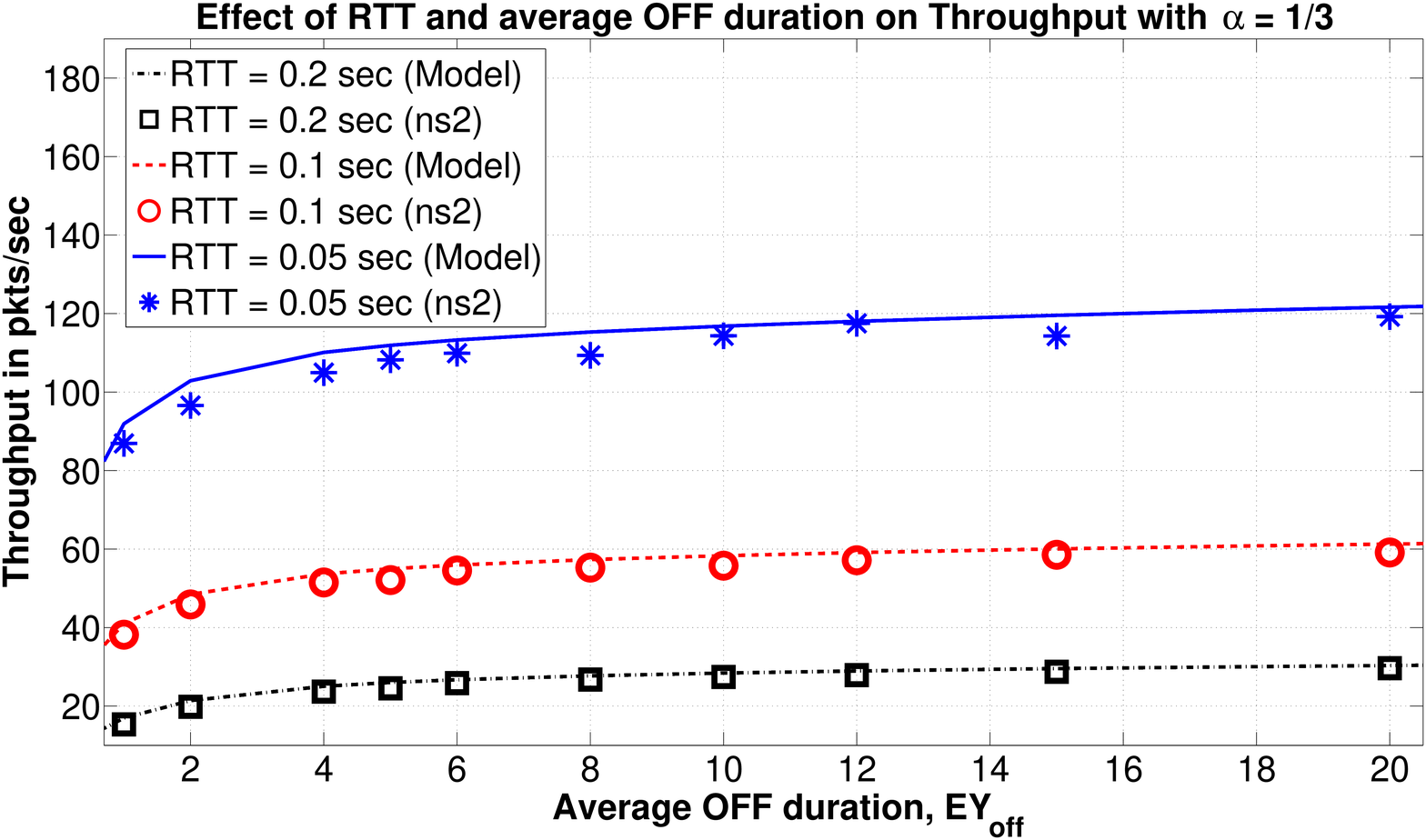}
\end{tabular}
\caption{Effect of RTT and $\mathbb{E}[Y_{off}]$ on probability of RTO and TCP throughput.}
\label{fig:Bexp_Iexp_eff_RTT}
\end{figure}
In Figure \ref{fig:Bexp_Iexp_eff_per}, 
we see the effect of packet error probability $p$ on the probability of retransmission timeout and the throughput of the flow. The round trip time is $0.1$ seconds. We see that for fixed $\alpha$ and RTT, $R$, the probability of retransmission timeout decreases and throughput increases with increase in the average OFF duration. Our model results match well with ns2 simulations and the errors are less than $8\%$. 

\begin{figure}
\centering
\begin{tabular}{c}
\includegraphics[scale=0.24, trim = 65 10 140 5, clip=true]{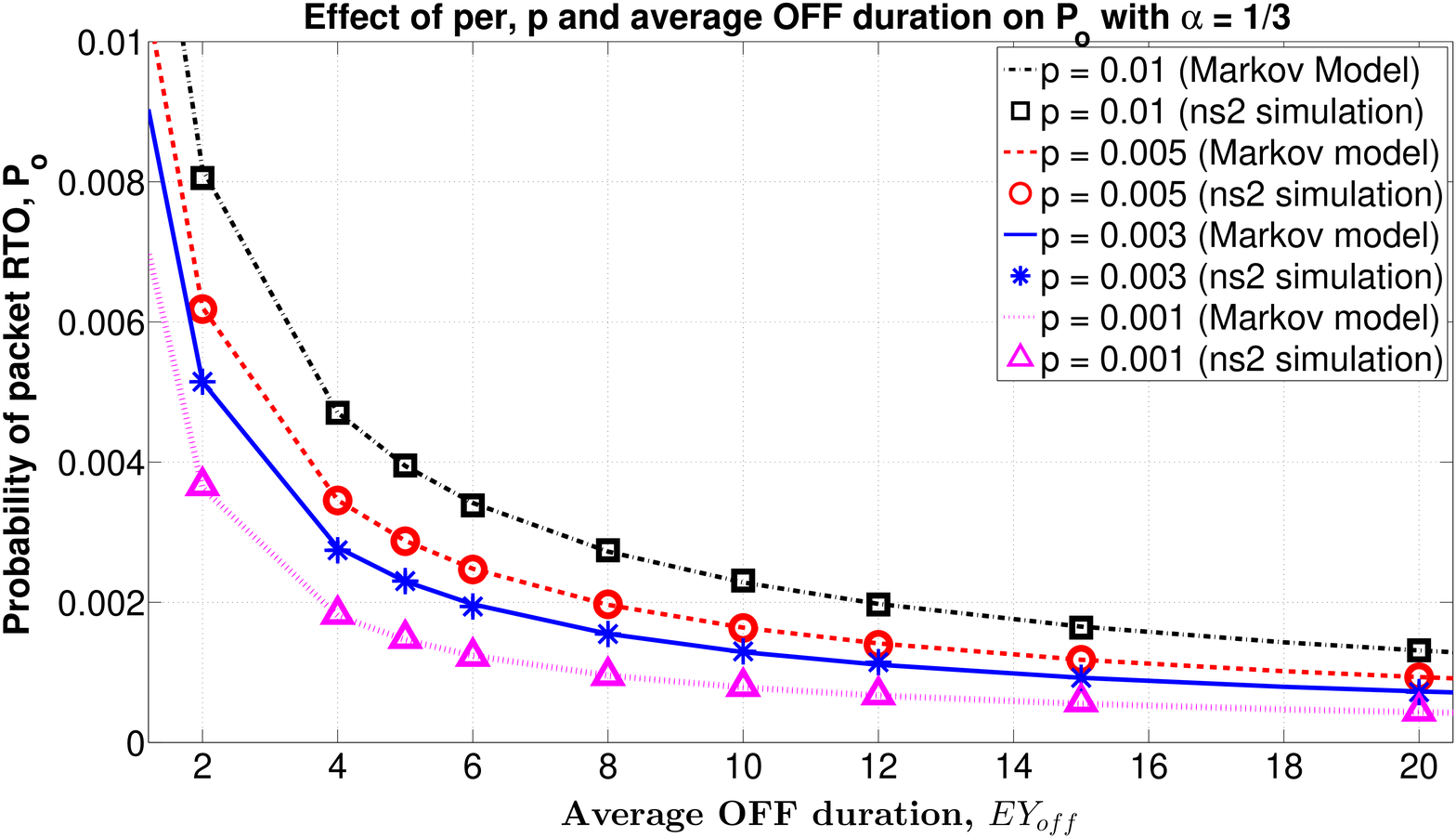} \\ \\ \\
\includegraphics[scale=0.24, trim = 65 10 140 5, clip=true]{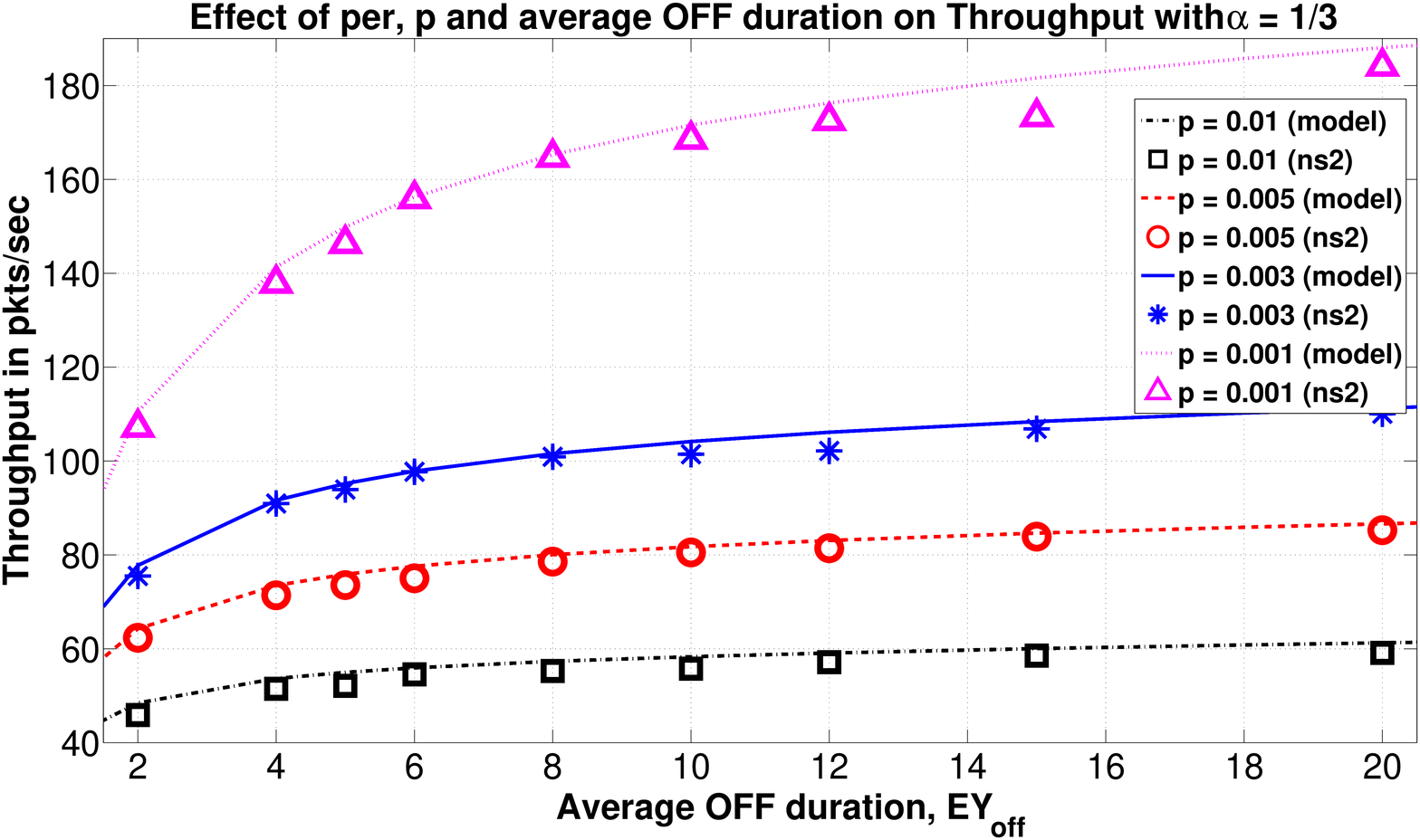}
\end{tabular}
\caption{Effect of packet error rate and $\mathbb{E}[Y_{off}]$ on TCP performance.}
\label{fig:Bexp_Iexp_eff_per}
\end{figure}

%
In Figure \ref{fig:Bexp_Iexp_X_5M_1M}, 
we show the effect of link speed on the probability of retransmission timeout and the throughput of the TCP flow. The RTT is $0.1$ seconds and CR channel link speeds are set to $5$ Mbps and $1$ Mbps. We see that our model approximations are reasonable, the errors are less than $10\%$.
\begin{figure}
\centering
\begin{tabular}{c}
\includegraphics[scale=0.24, trim = 65 10 140 5, clip=true]{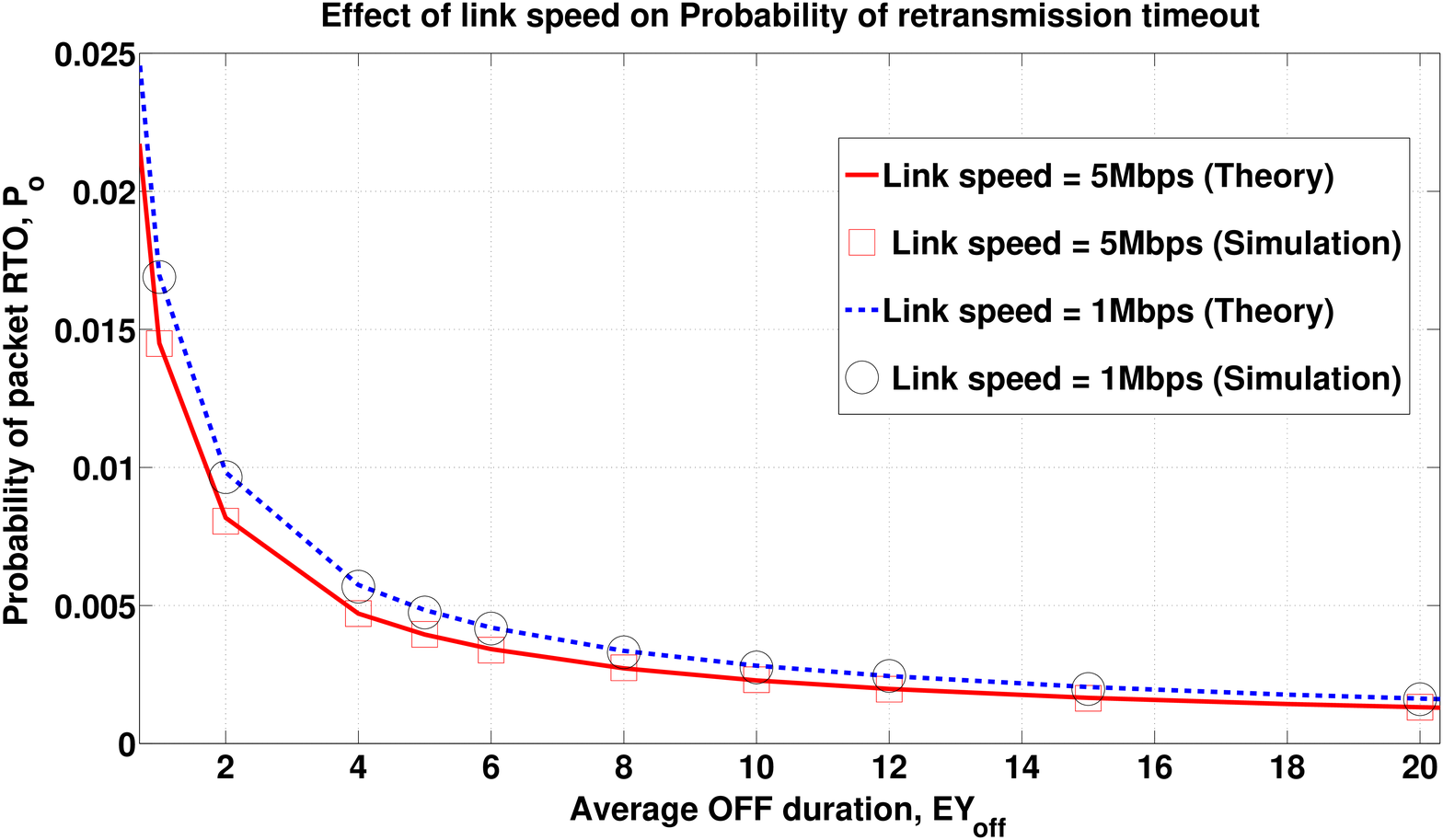} \\ \\ \\
\includegraphics[scale=0.24, trim = 65 10 140 5, clip=true]{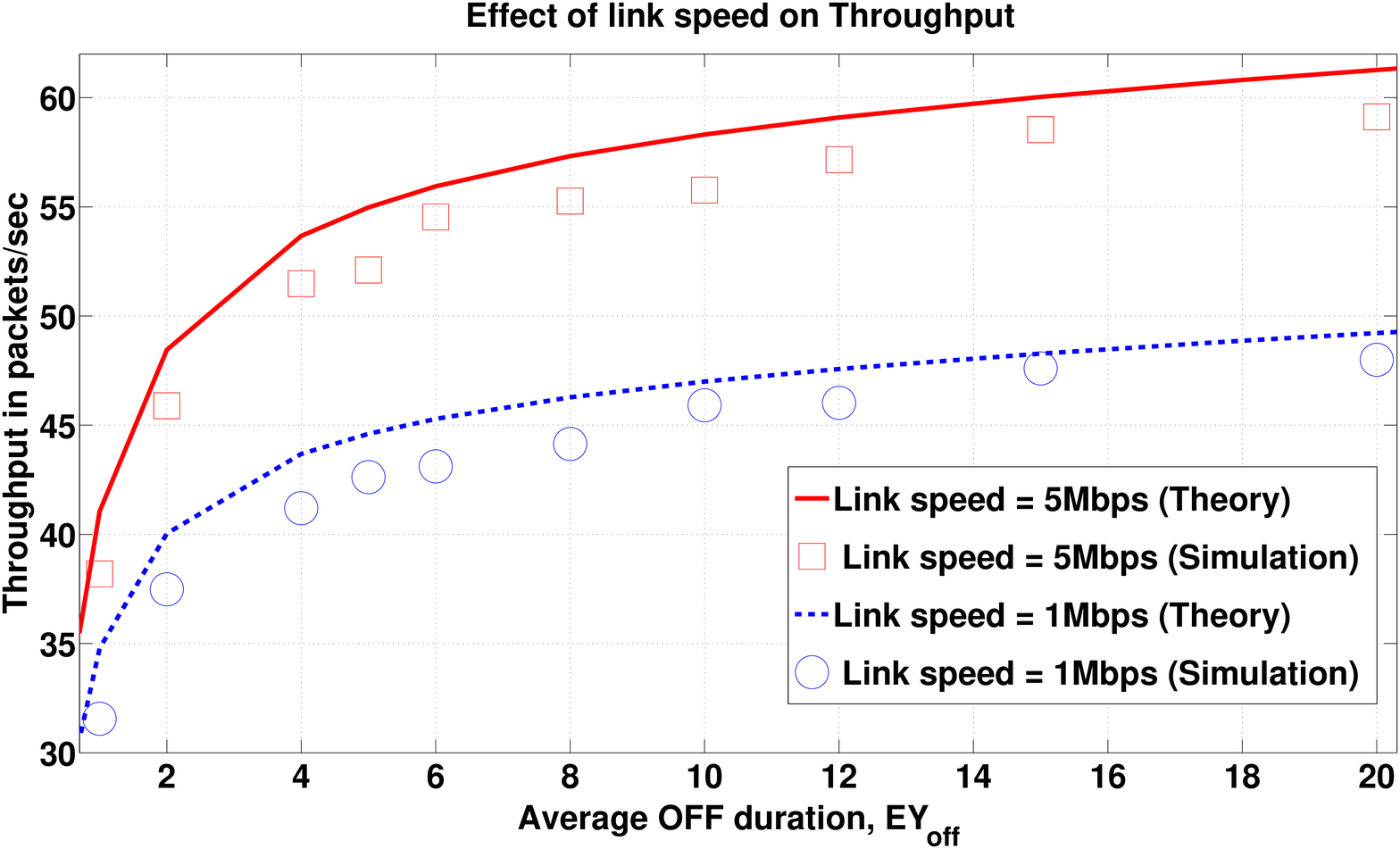}
\end{tabular}
\caption{Effect of link speed and $\mathbb{E}[Y_{off}]$ on probability on TCP performance.}
\label{fig:Bexp_Iexp_X_5M_1M}
\end{figure}
\section{Phase-type Distributions for ON/OFF Periods}
\label{sec:phase_type}
We have considered the case where the ON and OFF periods are exponentially distributed. In \cite{Feldmann1997}, we see that many quantities that characterize network performance, for example file lengths, call holding times, intervals between requests in Internet traffic are not exponential but can be modelled by mixtures of exponential distributions. We now extend our model to the case where both ON and OFF periods are phase-type \cite{Asmussen}. Phase-type distributions generalize exponential distributions (and include mixtures of exponentials) and can approximate any distribution on $\mathbb{R}^{+}$ arbitrarily closely \cite{Asmussen}. As before, the sequence of ON periods and OFF periods are both  i.i.d. and are independent of each other. Also, as before the state of the system at the beginning of the $k^{th}$ transmission of a window workload is denoted by $(S_k, J_k, W_k, H_k)$. 

When ON and OFF periods are phase-type, the channel state $S(t)$ can be taken as a finite state Markov chain with state space $\mathcal{X}_0 \cup \mathcal{X}_1$. When $S(t) \in \mathcal{X}_0$ then the channel is OFF and when $S(t) \in \mathcal{X}_1$ then it is ON. Also, the process $(S_k, J_k, W_k, H_k)$ is a discrete time Markov chain (embedded in $S(t)$). Let $P_t(i,j)$ be the probability of the channel state $S(t)$ at time $t$ being $j$ given that it was in state $i$ at time $0$. Once we know $P_t(i,j)$, we can find the transition probabilities for the Markov chain $\{(S_k, J_k, W_k, H_k)\}$. 

Suppose $Q$ is the transition rate matrix for the CTMC $\{S(t)\}$, and let $\psi_t := \{ \psi_t(i) : i \in \mathcal{X}_0 \cup \mathcal{X}_1 \}$ denote the probability distribution of $S(t)$ at time $t > 0$. It satisfies
\begin{equation}
\psi_t = \psi_0 e^{Qt}, 
\end{equation}
where $\psi_0$ is the initial distribution. Also, $P_t = e^{Qt}$. It can be computed using the $expm$ command in MATLAB or by using eigenvalue decomposition techniques \cite{Perko2001}. When the state space of $S(t)$ is large, the exact computation of  $e^{Qt}$  may be difficult. We may then use some approximations \cite{Reibman1988}.

The Markov chain $\{(S_k, J_k, W_k, H_k)\}$ is finite. Therefore it has at least one positive recurrent communicating class. Suppose the channel is ON. Then once the ON period ends, an OFF period large enough so that we have at least two timeouts, followed by an ON period, large enough to accommodate one transmission attempt, ensures that the set of states $\{(i, R, 1, 1): i \in \mathcal{X}_1\}$ is hit. Similarly, when the Markov chain is in an OFF state, an OFF period large enough so that we have a total of at least two timeouts, followed by an ON period, large enough to accommodate one transmission attempt, ensures that the set of states $\{(i, R, 1, 1): i \in \mathcal{X}_1\}$ is hit. Since the ON periods are phase type, there is a non-zero probability that for any $i \in \mathcal{X}_1$, the state $(i, R, 1, 1)$ is visited by the Markov chain. Thus, the state $(i,R,1,1)$ for any $i \in \mathcal{X}_1$ is a positive recurrent state. If the probability of packet loss during the ON period is greater than $0$, then the state $(i,R,1,1)$ also has a self-loop. Therefore the Markov chain is aperiodic and has a unique stationary distribution $\pi$. Also, starting from any initial state the chain converges exponentially to the stationary distribution in total variation.

The steady state probability distribution, $\pi$, of the Markov chain, $\{(S_k, J_k,$ $W_k, H_k)\}$ is computed by repeated iteration of $\pi_{k+1} = \pi_k P$, where $P$ is the transition probability matrix, starting from some initial $\pi_0$. The size of the state space of the Markov chain is $O(| \mathcal{X}_0 + \mathcal{X}_1| |\mathcal{D}| W_{max}^2)$. The number of transitions from a state are of the order $O(2| \mathcal{X}_0 + \mathcal{X}_1|)$. Thus each iteration of $\pi_{k+1} = \pi_k P$ requires $O(2 | \mathcal{X}_0 + \mathcal{X}_1|^2 |\mathcal{D}| W_{max}^2)$ computations. From our earlier discussion, these iterations converge exponentially to the stationary distribution.

We now develop theoretical results for (a) phase-type OFF and general ON periods under certain conditions and (b) phase-type ON and general OFF periods under certain conditions. We will show stationarity of the process $\{(S_k, J_k, W_k, H_k)\}$  in these cases. The stationary behaviour can then be used to compute the throughput and the probability of timeout for a TCP flow on an ON-OFF channel.

\subsection{Phase-type OFF Periods}
In this section, we consider the case when OFF periods are i.i.d. phase-type while the ON periods are more general. Let us denote by $U_k$ the time 
when the secondary TCP makes the $k^{th}$ transmission attempt of a window load of packets. Let $U_0 = 0$. Therefore $U_k = \sum_{n=1}^{k-1} J_n$, where $J_n$ is the time duration between the $n^{th}$ and $(n+1)^{st}$ transmission attempt. Let $\{X(t)\}$ be a CTMC with finite state space $\{1\} \cup \mathcal{X}_0$, where $1 \notin \mathcal{X}_0$. The set $\mathcal{X}_0$ is irreducible and state $1$ is an absorbing state reachable from every state in $\mathcal{X}_0$. The sojourn times in each state $i \in \mathcal{X}_0$ are exponential with parameter $\lambda_i$, $0 < \lambda_i < \infty$. Let $\lambda \triangleq  \max_{i \in \mathcal{X}_0} \lambda_i < \infty$. We define the channel state process $S(t)$ as follows. Suppose $S(t)$ enters ON state at time $t$ (denoted by  $S(t) = 1$). It will stay in ON state with a general distribution ($Y_{on}$ will denote a random variable with that distribution). At time $t +   Y_{on}$, it will enter the OFF period. The duration of the OFF period is given as follows. Let $X(0) = i_0 \in \mathcal{X}_0$ where $i_0$ is a fixed  state of $\{X(t)\}$. The OFF period of $S(t)$ will equal the time $\{X(t), t \geq 0\}$ takes to reach state $1$. Let this time be $X$. Then $S(t)$  stays in OFF state till time $t + Y_{on} + X$ and then switches to ON state . At time $t + Y_{on} + s$, $S$ takes the value $X(s)$, for $0 \leq s \leq  X$. The ON-OFF periods of $S(t)$ alternate with durations independent of each other with distributions specified above. Let $Q$ be the generator matrix  for the CTMC $X(t)$ when in $\mathcal{X}_0$. By uniformization \cite{Stewart2009}, before $X(t)$ gets absorbed, the transition function of $\{X(t)\}$ is, 
\begin{equation}
\label{eqn:ctmc_trans_prob}
P_t(i,j) = \sum_{n=0}^{\infty} e^{-\lambda t} P^{n}_{i,j} \frac{(\lambda t)^n}{n!},
\end{equation}
where $P_{ij} = \frac{Q_{ij}}{\lambda}$ for $i \neq j$ and $P_{ii} = 1 - \sum_{j \neq i} \frac{Q_{ij}}{\lambda}$.

Consider the state $(S_k = i_0, J_k = 2M, W_k = 1, H_k = 1)$ where $i_0 \in \mathcal{X}_0$. Once the process $\{(S_k, J_k, W_k, H_k)\}$ hits the state, $(i_0, 2M, 1, 1)$, the future process evolution is independent of the past. Thus, $\{(S_k, J_k,  W_k, H_k)\}$ is regenerative. Let $N$ denote the number of transmission attempts made by TCP in a regeneration cycle. We have the following result.

\begin{proposition}
\label{thm:phaseOFF_stationary}
If the ON periods are i.i.d. with 
\begin{equation}
\label{eqn:NBU}
\mathbb{P}(Y_{on} \leq s + R| Y_{on} \geq s) \geq \epsilon_0 > 0
\end{equation} 
and $0 < \mathbb{P}(Y_{on} \leq R) < 1$, then for all $\alpha > 0$, $\mathbb{E}[N^{\alpha}] < \infty$ and $N$ has a finite moment generating function (mgf) in a neighbourhood of $0$. Also $N$ is aperiodic and the stochastic process $\{(S_k, J_k, W_k, H_k)\}$ converges in total variation, exponentially to its unique stationary distribution
\begin{equation}
\label{eqn:regen_eqn_phaseOFF}
\begin{aligned}
\mathbb{P}_{\pi}((S_k, &  J_k, W_k, H_k)  \in A ) \\
 & = \frac{\mathbb{E} [ \sum_{k=0}^{N} \mathbf{1}_A (S_k, J_k, W_k, H_k)]}{\mathbb{E}[N]},
\end{aligned} 
\end{equation}
where $k=0$ is a regeneration epoch.
\end{proposition}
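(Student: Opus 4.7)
The plan is to exploit the regenerative structure of $\{(S_k, J_k, W_k, H_k)\}$ induced by the visits to the state $\tau = (i_0, 2M, 1, 1)$, and to show via the NBU-type condition on $Y_{on}$ that the cycle length $N$ has exponentially decaying tail; the remaining assertions then follow from standard regenerative theory (as in Asmussen). First I would confirm that $\tau$ is a legitimate regeneration state. Since $i_0 \in \mathcal{X}_0$, the chain is in an OFF period at $\tau$; because the OFF duration is defined as the hitting time of the absorbing state $1$ by the CTMC $X(t)$ started at $i_0$, the Markov property of $X(t)$ implies that any revisit to $i_0$ resets the residual OFF distribution to that of a fresh OFF period. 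Combined with the i.i.d. ON periods and the deterministic evolution of $(J_k, W_k, H_k)$ given $S_k$ and the history of window dynamics, hitting $\tau$ makes the future distributionally independent of the past.

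The main technical step is a uniform lower bound: there exist $K < \infty$ and $\delta > 0$ such that, from any state $(s, j, w, h)$, the probability of reaching $\tau$ within $K$ transmission attempts is at least $\delta$. I would build this as a concatenation of four sub-events, each with probability bounded below by a constant independent of the starting state: (i) termination of any ongoing ON period within $m_1$ attempts, which by the NBU condition has probability at least $1 - (1-\epsilon_0)^{m_1}$, since iterating $\mathbb{P}(Y_{on} > s+R \mid Y_{on} > s) \leq 1-\epsilon_0$ gives $\mathbb{P}(Y_{on} > m_1 R) \leq (1-\epsilon_0)^{m_1}$; (ii) the subsequent OFF period lasting at least a fixed length $L$ large enough to accommodate $\lceil \log_2 W_{max}\rceil + 1$ consecutive timeouts, whose probability is a positive constant determined by the phase-type OFF distribution; (iii) during this long OFF period, the window collapsing to $W = 1, H = 1$ after the corresponding cascade of timeouts (with no successful transmission in between, which is automatic once the channel is OFF); and (iv) the CTMC $X(t)$ being in state $i_0$ at the specific transmission epoch at which $J_k = 2M$, which has positive probability by the irreducibility of $X(t)$ on the finite set $\mathcal{X}_0$ together with the positive probability that the OFF period continues past that epoch. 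Concatenating these, $N$ is stochastically dominated by $K$ times a geometric random variable with parameter $\delta$, which yields $\mathbb{E}[N^\alpha] < \infty$ for all $\alpha > 0$ and a finite mgf $\mathbb{E}[e^{\theta N}]$ for $\theta$ in a neighborhood of $0$.

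Aperiodicity of $N$ follows because $\tau$ admits return cycles of differing lengths: the condition $0 < \mathbb{P}(Y_{on} \leq R) < 1$ together with the positive packet-error probability during ON implies one can construct paths from $\tau$ to $\tau$ of two distinct lengths with positive probability (for instance a short path that re-enters OFF at the very next attempt, and longer paths that first enter the ON period and undergo several RTTs of successful transmission). Their $\gcd$ is $1$. With $N$ aperiodic, positive, and having finite mean, standard regenerative theory gives the unique stationary distribution via formula \eqref{eqn:regen_eqn_phaseOFF}; the finiteness of the mgf of $N$ upgrades the convergence of the law of $(S_k, J_k, W_k, H_k)$ to $\pi$ to exponential rate in total variation, by the geometric-tail coupling argument for regenerative processes (Asmussen, \emph{Applied Probability and Queues}, Ch.\ VI--VII).

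The hardest part will be step (iv) of the construction: pinning down the joint event that the CTMC is in state $i_0$ at exactly the right transmission epoch (the one where $J_k = 2M$ and $H_k$ has just been driven to $1$). This must be handled via a careful, state-dependent choice of the block length $K$ and a bounded-below probability for the CTMC trajectory on $\mathcal{X}_0$ over a fixed interval; the key enabler is that $\mathcal{X}_0$ is finite and irreducible before absorption, so for any window of relevant length the probability of any prescribed state at a prescribed time is bounded below by a strictly positive constant.
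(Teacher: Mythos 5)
Your overall strategy coincides with the paper's: take $(i_0,2M,1,1)$ as the regeneration state, establish a uniform lower bound $\delta>0$ on the probability of returning to it within a bounded number $K$ of transmission epochs, dominate $N$ by $K$ times a geometric random variable to get all moments and a finite mgf near $0$, prove aperiodicity by exhibiting two return cycles of coprime lengths, and invoke regenerative-process theory for \eqref{eqn:regen_eqn_phaseOFF} and the exponential total-variation convergence. However, the concrete route you describe has a gap for starting states in the OFF period. Once the channel is OFF, each further failed attempt only doubles the backoff value $J$ (toward $T_{max}$); $J$ can return to $2M$ only after an intervening successful transmission resets it to $R$ and is then followed by exactly two consecutive timeouts. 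So from a generic OFF state with $J\geq 2M$, your concatenation (i)--(iv) never reaches a state with $J=2M$: you must first let the phase process be absorbed within a window of fixed length (uniformly positive probability over all phases by the uniformization bound --- this is the paper's Step 1), transmit one window successfully, and only then run the timeout cascade. As written, your construction covers only the ON-start half of the argument.

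Second, the ``$\lceil \log_2 W_{max}\rceil+1$ consecutive timeouts'' in your steps (ii)--(iii) misreads the dynamics \eqref{eqn:Reno_TO}: a timeout sets $W$ to $1$ immediately (it is $H$ that becomes $W_k/2$), so the second consecutive timeout already yields $(W,H)=(1,1)$ --- which is exactly why the regeneration state carries $J=2M$ and why the paper needs the OFF period only to outlast two backoffs. Demanding a much longer OFF period does not make your bound vanish (phase-type laws have unbounded support), but it sits in tension with your own step (iv), which targets the epoch at which $J=2M$: after the long cascade $J$ is near $T_{max}$, so the state you need is hit two timeouts in, not at the end of the cascade. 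Finally, your aperiodicity witness ``a short path that re-enters OFF at the very next attempt'' is not a return to $\tau$ (it leads to $J=4M$); the correct witnesses are the $N=3$ and $N=4$ cycles that pass through one, respectively two, transmission epochs inside the ON period, whose existence is what the hypothesis $0<\mathbb{P}(Y_{on}\le R)<1$ guarantees.
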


\begin{proof}
We first prove that from any arbitrary state in the OFF period the process can visit the ON state in one step with probability $\geq \epsilon_1 > 0$.

\textbf{Step 1: } Since $\mathcal{X}_0$ is irreducible and $1$ is reachable from all states, for all $i \in \mathcal{X}_0$ and all $j \in \mathcal{X}_0 \cup \{1\}$, there exists $n(i,j) > 0$ such that the probability $P_{ij}^{n(i,j)}$ of hitting state $j$ in $n(i,j)$ steps starting from state $i$ for the DTMC of $\{X(t)\}$ with transition matrix $P_{ij}$, is strictly positive in equation \eqref{eqn:ctmc_trans_prob}. Let 
\begin{equation}
\epsilon(i,j) \triangleq e^{- \lambda t_2} P_{ij}^{n(i,j)} \frac{(\lambda t_1)^{n(i,j)}}{n(i,j)!}.   
\end{equation} 
Then $P_t(i,j) > \epsilon(i,j) > 0$ for all $t \in [t_1, t_2]$ where $0 < t_1 < R < T_{max} < t_2 < \infty$. Let $\epsilon_1 = \min_{i} \epsilon(i, 1) > 0$, where $i \in \mathcal{X}_0$. Therefore $P_t(i,1) \geq \epsilon_1$ for all $i$ and for all $t \in [t_1, t_2]$. Then,
\begin{equation}
\label{eqn:keyfact}
\mathbb{P}(S_{k+1} = 1, J_{k+1} = R |  S_k = x, J_k = d) \geq \epsilon_1,
\end{equation} 
for all states $(x, d)$ in the state space of the process $\{(S_k, J_k)\}$ with $x \in \mathcal{X}_0$.

We note that the inequality \eqref{eqn:keyfact} is true if we replace $(S_{k+1} = 1, J_{k+1} = R)$ by the term $(S_{k+1} = y, J_{k+1} = 2d)$ for any $y \in \mathcal{X}_0$  possibly with a different $\epsilon_1(y) > 0$ as a lower bound. We will use this fact later in the proof.

\textbf{Step 2: } 
Let us denote by $Y'$ the age of the ON period at $U_k$, i.e., the time elapsed since the ON period started, when $S_k = 1, J_k = R$ at $U_k$ and let $F_{Y'}$ be its cdf. Now
\begin{equation*}
\begin{aligned}
\sum_{y \in \mathcal{X}_0} & \mathbb{P}(S_{k+1} = y, J_{k+1} = M | S_k = 1, J_k = R) \\
& \geq \int_{s} \int_{u=0}^{R}  \mathbb{P}(Y_{on} = s + u | Y' = s)  \mathbb{P}(Y_{off} > R - u) du \hspace*{1mm} dF_{Y'}(s) \\
& \geq \int_{s} \int_{u=0}^{R}  \mathbb{P}(Y_{on} = s + u | Y_{on} \geq s) \mathbb{P}(Y_{off} > R) du \hspace*{1mm} dF_{Y'}(s) \\
& = \mathbb{P}(Y_{off} > R) \int_{s} \mathbb{P}(Y_{on} \leq s + R | Y_{on} \geq s) dF_{Y'}(s) \\
& > \epsilon_0 \mathbb{P}(Y_{off} > R)\triangleq \epsilon_0' > 0.
\end{aligned}
\end{equation*}
where the last inequality follows from \eqref{eqn:NBU} and $\mathbb{P}(Y_{off} > R) > 0$ because $Y_{off}$ is phase-type.

Thus with probability $> \epsilon_0'$ the process exits the ON period to visit the OFF period in one step. The inequality \eqref{eqn:keyfact} is true for the state $(S_{k+1} = i_0, J_{k+1} = 2M)$ with some $\epsilon_2 > 0$ as the lower bound. Therefore, $\mathbb{P}(S_{k+1} = i_0, J_{k+1} = 2M |  S_k = y, J_k = M) \geq \epsilon_2 > 0$  for any $y \in \mathcal{X}_0$. This shows that the process $\{(S_k, J_k, W_k, H_k)\}$ visits the state $(i_0, 2M, 1, 1)$ in a sequence of two steps from the ON state with probability $> \epsilon_0' \epsilon_2 > 0$, (subfigure 1 of Figure  \ref{fig:NBUON_phaseOFF_pRecurrence}) .

\begin{figure}
  \centering
  \includegraphics[scale=0.45]{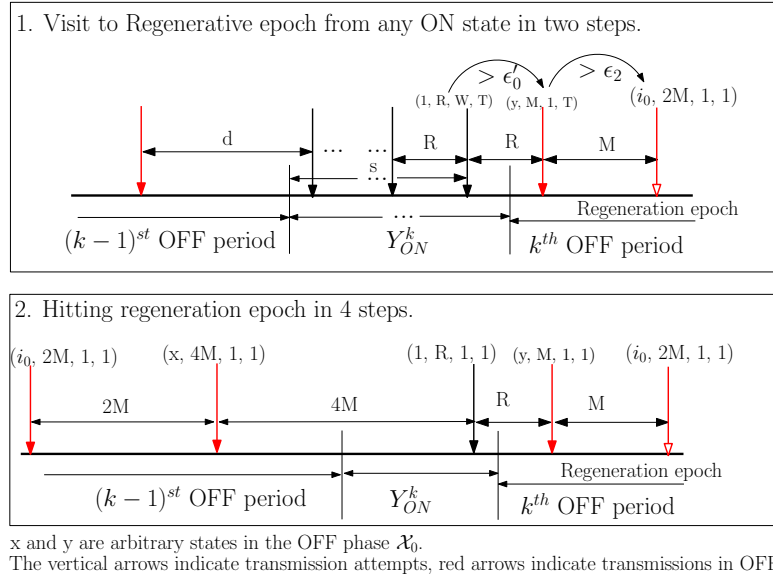}
  \caption[Regeneration time for phase-type OFF]{Hitting $(i_0, 2M, 1, 1)$ in finite time from any state in process.}
  \label{fig:NBUON_phaseOFF_pRecurrence}
\end{figure}

\textbf{Step 3: }  From steps $1$ and $2$, the process can visit the regeneration epoch 
 $(i_0, 2M, 1, 1)$ from any state in the state space in less than three steps with probability $> \epsilon =  \epsilon_0' \epsilon_1 \epsilon_2 > 0$. Thus in particular, $\mathbb{P}(N = 3) > 0$. Consider random variable $Z$ with distribution $\mathbb{P}(Z = 3k) = (1 - \epsilon)^{k-1} \epsilon , \text{ for } k \geq 1$. The random variable $Z$ is stochastically larger than the regeneration length $N$, i.e., $\mathbb{P}(Z \geq \beta) \geq \mathbb{P}(N \geq \beta)$ for all $\beta > 0$. Therefore, for all $\alpha > 0$, $\mathbb{E}[N^{\alpha}] \leq \mathbb{E}[Z^{\alpha}]  < \infty$. Also, the moment generating function of $N$ is finite in a neighbourhood of $0$.

We have shown that $\mathbb{P}(N=3) > 0$. In subfigure $2$ of Figure \ref{fig:NBUON_phaseOFF_pRecurrence}, we show a sequence of transitions such that $\mathbb{P}(N = 4) > 0$. This shows that $N$ is aperiodic. Thus  $\{(S_k, J_k, W_k, H_k)\}$ has a unique stationary distribution, \eqref{eqn:regen_eqn_phaseOFF} and converges to it in total variation from any initial distribution exponentially (because of finiteness of mgf of $N$ in a neighbourhood of $0$ \cite{Kalashnikov1993}). 
\end{proof}

The condition \eqref{eqn:NBU} is satisfied by a general class of distributions called New Better than Used (NBU)\cite{Marshall2007}  and also by phase-type distributions. The NBU distributions are useful in reliability theory. They are also relevant in our case, as one would typically expect that the primary busy period starting afresh is likely to last longer than an ongoing busy period.

\subsection{Phase-type ON Periods}
\label{sec:generalOFF_Extension}
We now develop theoretical results for phase-type ON and general OFF periods. Let $X(t)$ denote the phase of the ON period. The process $\{X(t)\}$ is a CTMC with a finite state space $\{0\} \cup \mathcal{X}_1$ with $0 \notin \mathcal{X}_1$. Let us denote by $S(t)$ the state of the ON-OFF channel at time $t$. If the channel is ON at  time $t$, then $S(t) = X(t)$, else $S(t) = 0$, where $0$ is the absorbing state for the CTMC $\{X(t)\}$. We assume $\mathcal{X}_1$ is irreducible and $0$ is reachable from all states. The process $\{(S_k, J_k, W_k, H_k)\}$ is regenerative with visits to the state $(i_1, R, 1, 1)$, ($i_1 \in \mathcal{X}_1$ is a fixed state), acting as regeneration epochs. We denote by $N$ the number of transmission attempts made in a regeneration cycle. 

\begin{proposition}
\label{thm:phaseON_stationary}
If the OFF periods are i.i.d. with $\mathbb{P}(s + M < Y_{off} < s + 3M) \geq \epsilon_0 > 0$ for all $s \leq R$, and 
\begin{equation} 
\label{eqn:NBUOFF} P(Y_{off} \leq s + 2M| Y_{off} \geq s) \geq \epsilon_1 > 0,
\end{equation}
then for all $\alpha > 0$, $\mathbb{E}N^{\alpha}] < \infty$ and $N$ has a finite  moment generating function in a neighbourhood of $0$. Also, $N$ is aperiodic and the stochastic process $\{(S_k, J_k, W_k, H_k)\}$ converges in total variation, exponentially to its unique stationary distribution \eqref{eqn:regen_eqn_phaseOFF}.
\end{proposition}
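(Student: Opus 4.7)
The plan is to mirror the three-step strategy used in the proof of Proposition~\ref{thm:phaseOFF_stationary}, with the roles of ON and OFF swapped: now the phase-type structure lies on the ON side and will be exploited via uniformization of $\{X(t)\}$, while the two hypotheses on $Y_{off}$ play the role previously played by the NBU condition on $Y_{on}$. First, I would establish a one-step lower bound for transitions inside the ON period. Because $\mathcal{X}_1$ is a finite irreducible class for $\{X(t)\}$ before absorption at $0$, the uniformized DTMC on $\mathcal{X}_1$ is irreducible, so for every $i \in \mathcal{X}_1$ there is some $n(i)$ with $P^{n(i)}_{i,i_1} > 0$; substituting this into the expansion \eqref{eqn:ctmc_trans_prob} and multiplying by the appropriate Poisson weight yields $P_t(i, i_1) \geq \epsilon' > 0$ uniformly in $i \in \mathcal{X}_1$ and for $t$ in a fixed window containing $R$. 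Combined with a positive probability of no packet loss on the successful RTT, this gives $\mathbb{P}(S_{k+1} = i_1, J_{k+1} = R \mid S_k = i, J_k = R) \geq \tilde{\epsilon} > 0$ for $i \in \mathcal{X}_1$, which is the ON-side analogue of the key inequality \eqref{eqn:keyfact} used in Proposition~\ref{thm:phaseOFF_stationary}.

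Second, I would piece together a bounded-length path to the regeneration state $(i_1, R, 1, 1)$ that passes through the OFF period. From any state $(i, R, W, H)$ with $i \in \mathcal{X}_1$, the finiteness of $\mathcal{X}_1$ together with the reachability of the absorbing state $0$ give a positive probability that $\{X(t)\}$ absorbs within the next RTT, producing in one step a transition into an OFF state whose OFF-period age $s$ at the resulting failed transmission is at most $R$. The hypothesis $\mathbb{P}(s + M < Y_{off} < s + 3M) \geq \epsilon_0$, valid uniformly for $s \leq R$, then forces with probability at least $\epsilon_0$ the scenario where the subsequent attempt at RTO time $M$ also falls in OFF (causing a backoff to $2M$) while the following attempt at cumulative time $3M$ lands back in ON; the timeout rule \eqref{eqn:Reno_TO} drives $(W, H)$ to $(1, 1)$ along the way, leaving the process in some ON state $(y, R, 1, 1)$. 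For starting states with $S_k \in \mathcal{X}_0$ whose OFF age is already large, the NBU-type hypothesis \eqref{eqn:NBUOFF} applied iteratively ensures that a bounded number of additional timeouts terminates the OFF period, reducing the problem to the previous case. A single further application of Step~1 then moves $(y, R, 1, 1)$ to $(i_1, R, 1, 1)$, so from every state the regeneration epoch is hit in at most $c$ steps with probability at least $\epsilon > 0$.

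Third, exactly as in the proof of Proposition~\ref{thm:phaseOFF_stationary}, dominating $N$ stochastically by a random variable $Z$ with $\mathbb{P}(Z = ck) = (1 - \epsilon)^{k-1} \epsilon$ immediately gives $\mathbb{E}[N^{\alpha}] < \infty$ for all $\alpha > 0$ and finiteness of the mgf of $N$ in a neighbourhood of $0$. Aperiodicity is obtained by exhibiting two realised cycle lengths differing by one, for example the minimal path constructed above together with a variant that inserts a single extra in-ON packet-loss step (available whenever $p > 0$). The stationary formula \eqref{eqn:regen_eqn_phaseOFF} then follows from standard regenerative process theory, and exponential convergence in total variation from the mgf bound via \cite{Kalashnikov1993}.

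The main obstacle I expect is Step~2: the argument requires that the OFF-period age $s$ at the first failed transmission be uniformly bounded by $R$ (or at least supported in $[0, R]$ with tractable mass), which entails integrating against the ON-to-OFF absorption-time distribution of $\{X(t)\}$ and verifying a positive lower bound uniform in the initial ON phase. Gluing this ``first entry into OFF'' analysis to the NBU-type extension \eqref{eqn:NBUOFF} for initial states whose OFF age already exceeds $R$ will require the most care, since after each additional timeout the cumulative age must remain in a range where the hypotheses continue to apply.
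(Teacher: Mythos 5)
Your overall strategy is the same as the paper's: uniformize the phase-type ON process to get uniform lower bounds on its transition probabilities, use the hypothesis $\mathbb{P}(s+M < Y_{off} < s+3M)\geq\epsilon_0$ to force exactly two consecutive timeouts (which is what drives $(W,H)$ to $(1,1)$ via \eqref{eqn:Reno_TO}) followed by re-entry into ON, use \eqref{eqn:NBUOFF} to escape OFF states of large age, and then dominate $N$ stochastically by a geometric-type variable to get the moments, the mgf, and exponential convergence via regenerative theory. However, there is one step that fails as written: your ``funneling'' move, where from $(y,R,1,1)$ you apply your Step~1 (an intra-ON transition with no packet loss) to reach the fixed regeneration state $(i_1,R,1,1)$. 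A successful round trip from $(y,R,1,1)$ does not leave the window components unchanged: by \eqref{eqn:Reno_CA} with $W_k = H_k = 1$ the window increments, so the state you actually reach is $(i_1,R,2,1)$ (or $(i_1,R,2,\cdot)$), not the regeneration state $(i_1,R,1,1)$ defined before the proposition. The extra step therefore destroys exactly the $(W,H)=(1,1)$ property that the two-timeout construction was designed to produce.

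The gap is fixable, and the paper's proof shows how: instead of appending a fourth step to steer the phase into $i_1$, one argues that the three-step path (ON period ends within one RTT, two timeouts, re-entry into ON) already lands on a \emph{specific} phase $i_1\in\mathcal{X}_1$ with probability bounded below uniformly over the starting ON phase $z$ --- this follows because the phase at re-entry is governed by the fresh ON period's initial distribution and the irreducibility of $\mathcal{X}_1$, independently of $z$, so $\sum_{y}\mathbb{P}(\cdot)\geq\epsilon_0'$ can be localized to one $y=i_1$. Alternatively you could redeclare the regeneration state to be $(i_1,R,2,1)$, but then you must also rework the definition of the regeneration epochs given before the proposition. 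Two smaller points: (i) when the attempt at cumulative time $3M$ lands ``back in ON'' you also need the new ON period to survive until that attempt, which the paper secures with the factor $\mathbb{P}(Y_{on}>4M)>0$ (positive since $Y_{on}$ is phase-type); and (ii) your aperiodicity argument leans on $p>0$, whereas the paper obtains two cycle lengths differing by one without assuming random losses (e.g.\ by inserting one extra successful ON round trip before the OFF excursion), so your variant proves a slightly weaker statement unless $p>0$ is added as a hypothesis.
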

\begin{proof} We first prove that from any arbitrary state in the ON period the process can visit the OFF state in one step with probability $\geq \epsilon_2 > 0$.

\textbf{Step 1: } Since $\mathcal{X}_1$ is irreducible and $0$ is reachable from all states, for all $i \in \mathcal{X}_1$ and all $j \in \mathcal{X}_1 \cup \{0\}$, there exists $n(i,j) > 0$ such that the probability $P_{ij}^{n(i,j)}$ of hitting state $j$ in $n(i,j)$ steps starting from state $i$ for the DTMC of $\{X(t)\}$ with transition matrix $P_{ij}$, is strictly positive in equation \eqref{eqn:ctmc_trans_prob}. Let 
\begin{equation}
\epsilon(i,j) \triangleq e^{- \lambda R} P_{ij}^{n(i,j)} \frac{(\lambda R)^{n(i,j)}}{n(i,j)!}.   
\end{equation} 
Then $P_R(i,j) > \epsilon(i,j) > 0$. Let $\epsilon_2 = \min_{i} \epsilon(i, 0) > 0$, where $i \in \mathcal{X}_1$. Therefore $P_R(i,0) \geq \epsilon_2$. Then,
\begin{equation}
\label{eqn:keyfact_2}
\mathbb{P}(S_{k+1} = 0, J_{k+1} = M |  S_k = x, J_k = R) \geq \epsilon_2,
\end{equation} 
for all $x \in \mathcal{X}_0$. 

\textbf{Step 2: } 
Let us denote by $Y'$ the age of the OFF period at $U_k$, i.e., the time elapsed since the OFF period started, when $S_k = 0, J_k = d$ at $U_k$ and let $F_{Y'}$ be its cdf. Now, for some $d \in \{2M, 4M, 8M, \cdots T_{max} \}$,
\begin{equation*}
\begin{aligned}
\sum_{y \in \mathcal{X}_1} & \mathbb{P}(S_{k+1} = y, J_{k+1} = R | S_k = 0, J_k = d) \\
& \geq \int_{s} \int_{u=0}^{d}  \mathbb{P}(Y_{off} = s + u | Y' = s)  \mathbb{P}(Y_{on} > d - u) du \hspace*{1mm} dF_{Y'}(s) \\
& \geq \int_{s} \int_{u=0}^{d}  \mathbb{P}(Y_{off} = s + u | Y_{off} \geq s) \mathbb{P}(Y_{on} > d) du \hspace*{1mm} dF_{Y'}(s) \\
& = \mathbb{P}(Y_{on} > d) \int_{s} \mathbb{P}(Y_{off} \leq s + d | Y_{off} \geq s) dF_{Y'}(s) \\
& > \epsilon_1 \mathbb{P}(Y_{on} > d)\triangleq \epsilon_1' > 0.
\end{aligned}
\end{equation*}
where the last inequality follows from \eqref{eqn:NBUOFF} and $\mathbb{P}(Y_{on} > d) \geq \mathbb{P}(Y_{on} > T_{max}) > 0$, for $d \in \{2M, 4M, 8M, \cdots T_{max} \}$, because $Y_{on}$ is phase-type.

Thus with probability $> \epsilon_1'$ the process exits the OFF period to visit the ON period in one step. 

\textbf{Step 3: } Let us denote by $Y'(z)$ the residual life time of the ON period at $U_k$ when $S_k = z \in \mathcal{X}_1, J_k = R$ at $U_k$ and let $F_{Y_{z}'}$ be its cdf. For any state $z \in \mathcal{X}_1$, $\mathbb{P}(Y'(z) < R) > 0$. Therefore, $\min_{z \in \mathcal{X}_1} \mathbb{P}(Y'(z) < R) > 0$. Now
\begin{equation}
\begin{aligned}
\displaystyle{\sum_{y \in \mathcal{X}_1}} \mathbb{P}(S_{k+3} & = y, J_{k+3} = R, W_{k+3} = 1, H_{k+3} = 1 | S_k = z, J_k = R) \\
& \geq \int_{u=0}^{R} \mathbb{P}(R - u + M < Y_{off} < R - u + 3M) \mathbb{P}(Y_{on} > 4M) dF_{Y_{z}'}(u) \\
& \geq \epsilon_0 \mathbb{P}(Y_{on} > 4M) \int_{u=0}^{R}   dF_{Y_{z}'}(u) \\
& \geq \epsilon_0 \mathbb{P}(Y_{on} > 4M) \mathbb{P}(Y'(z) < R) \\
& \triangleq \epsilon_0' > 0. 
\end{aligned}
\end{equation}
for all $z \in \mathcal{X}_1$. Therefore there exists some $i_1 \in \mathcal{X}_1$ such that $ \mathbb{P}(S_{k+3} = i_1, J_{k+3} = R, W_{k+3} = 1, H_{k+3} = 1 | S_k = z, J_k = R) \geq \epsilon_2' > 0$ for all $z \in \mathcal{X}_1$. Therefore, there is a positive probability of the process 
hitting state $(i_1, R, 1, 1)$ from any ON state in exactly three steps.

\textbf{Step 4: } From above steps, we see that the process can visit the regenerative epoch $(i_1, R, 1, 1)$ from any state in the state space in less than four steps with probability $\delta = \epsilon_1'\epsilon_2'$. In particular, $N$ satisfies $\mathbb{P}(N = 4) > 0$. Consider random variable $Z$ with distribution $\mathbb{P}(Z = 4k) = (1 - \delta)^{k-1} \delta , \text{ for } k \geq 1$. The random variable $Z$ is stochastically larger than the regeneration length $N$, i.e., $\mathbb{P}(Z \geq \beta) \geq \mathbb{P}(N \geq \beta)$ for all $\beta > 0$. Therefore, for all $\alpha > 0$, $\mathbb{E}[N^{\alpha}] \leq \mathbb{E}[Z^{\alpha}]  < \infty$. Also, the moment generating function of $N$ is finite in a neighbourhood of $0$.

In Step $3$, we have shown that $\mathbb{P}(N = 3) > 0$. In subfigure $2$ of Figure \ref{fig:genOFF_phaseON_pRecurrence}, we show a sequence of transitions such that $\mathbb{P}(N = 4) > 0$. This shows that $N$ is aperiodic. Thus  $\{(S_k, J_k, W_k, H_k)\}$ has a unique stationary distribution, \eqref{eqn:regen_eqn_phaseOFF} and converges to it in total variation from any initial distribution exponentially (because of finiteness of mgf of $N$ in a neighbourhood of $0$ \cite{Kalashnikov1993}). 

\begin{figure}
  \centering
  \includegraphics[scale=0.45]{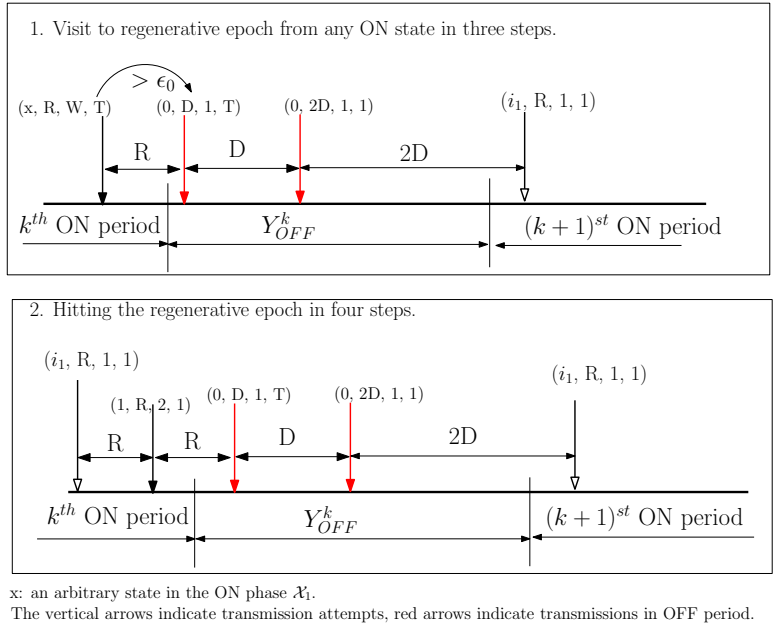}
  \caption[Regeneration time for phase-type ON]{Hitting $(i_1, R, 1, 1)$ in finite time from any state in process.}
  \label{fig:genOFF_phaseON_pRecurrence}
\end{figure}

\end{proof}

The conditions in Proposition \eqref{thm:phaseON_stationary} are satisfied if $Y_{off}$ has a positive density on $[M,R+3M]$ and has NBU distribution. It is also satisfied for phase-type  distributions. 

In the proofs for Propositions \eqref{thm:phaseOFF_stationary} and \eqref{thm:phaseON_stationary}, we do not assume random packet losses for convenience of notation. The propositions hold even in this case when the packet loss probability, $p < 1$ with a slight modification of proofs. 

These propositions show stationarity of the regenerative process modelling TCP behaviour in the setup of ON-OFF channels with more general ON and OFF distributions. This also ensures that the time averages for performance metrics such as throughput and probability of RTO converge to the stationary mean values.


\subsection{Simulation Results}
\label{sec:simulationResults_genBusy}
We now compare the probability of timeout and the throughput obtained via the analytical model with ns2 simulations. The probability of timeout and throughput can be computed using equations \eqref{eqn:PKT_RTO} and \eqref{eqn:T_palmcalculus} respectively with some modifications. For phase-type ON, we replace the term $1_{\{S = 1\}}$ by $1_{\{S \in \mathcal{X}_1\}}$ and for phase-type OFF process we replace $1_{\{S = 0\}}$ by $1_{\{S \in \mathcal{X}_0\}}$. The simulation setup is the same as in Section \ref{sec:simulationResults}.


We consider two cases (a) ON and OFF periods are both exponentially distributed and (b) ON and OFF are both Erlang-3 distributed. For these experiments, we set RTT to $0.1$ sec, $W_{max}$ to $100$ and the ON-OFF channel link speed is set to $5$ Mbps. The packets undergo Bernoulli random losses with probability $0.01$. We vary the average busy duration, $\mathbb{E}[Y_{off}]$ keeping $\alpha$ fixed at $1/3$. The results are shown in 
Figure \ref{fig:BX_IX}.
We see that our theoretical model results match well with simulations with errors less than $5\%$ in most cases. The errors are larger when the average ON and OFF durations are of the order of RTT. However, even for these cases the errors are less than $10\%$. 

In Figure \ref{fig:BX_IX_5M_1M}, we consider the effect of the cognitive channel link capacity on probability of timeout and throughput. The ON and OFF periods are both Erlang-3 distributed, we set RTT to $0.1$ seconds and $W_{max}$ to $100$. We consider link speeds of $1$ Mbps and  $5$ Mbps. We see that our theoretical model results match well with simulations with errors less than $5\%$ in most cases and always less than $11\%$ .

%

\begin{figure}
\centering
\begin{tabular}{c}
\includegraphics[scale=0.25, trim = 70 0 140 5, clip=true]{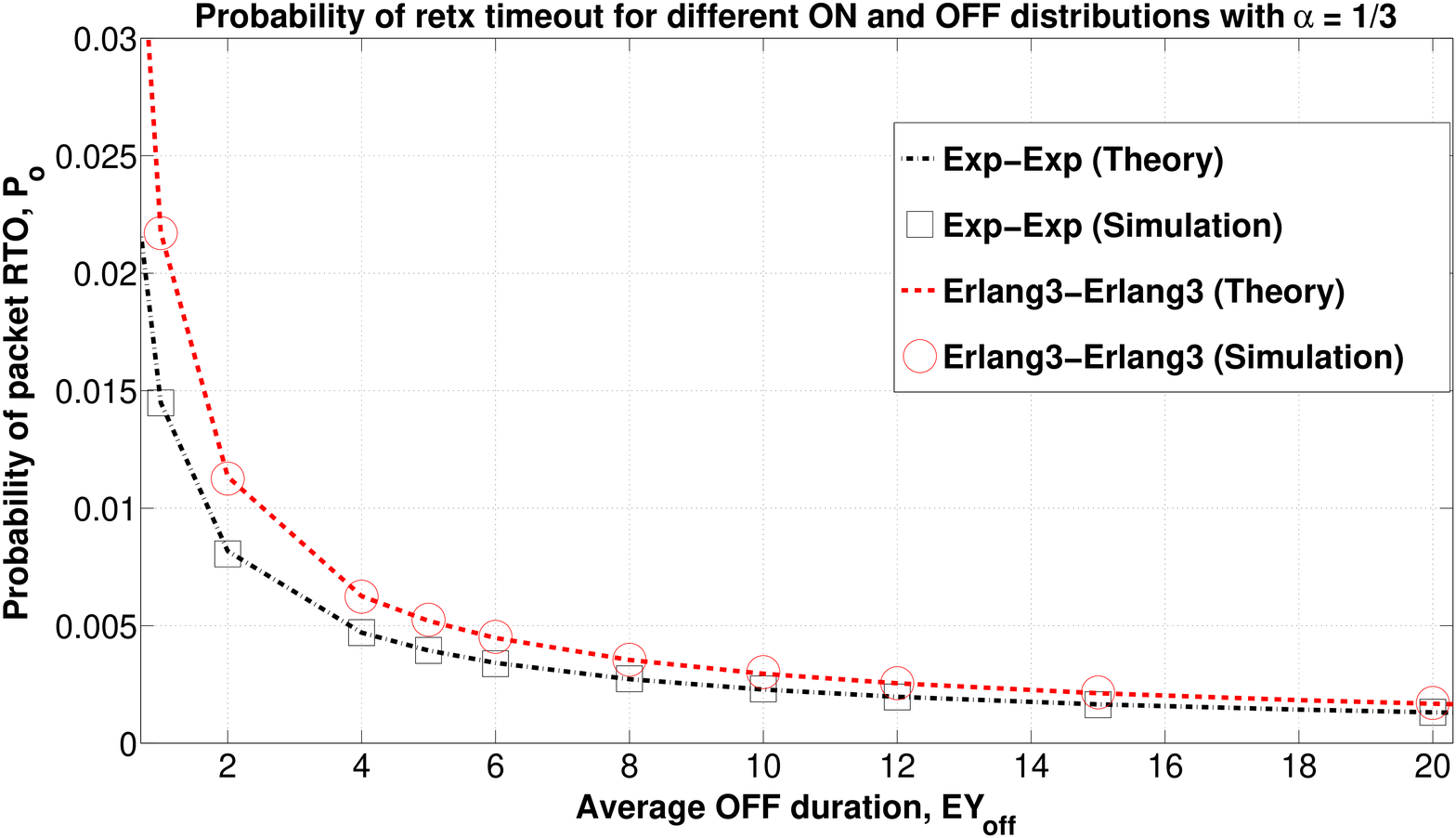} \\ \\ \\
\includegraphics[scale=0.25, trim = 70 0 140 5, clip=true]{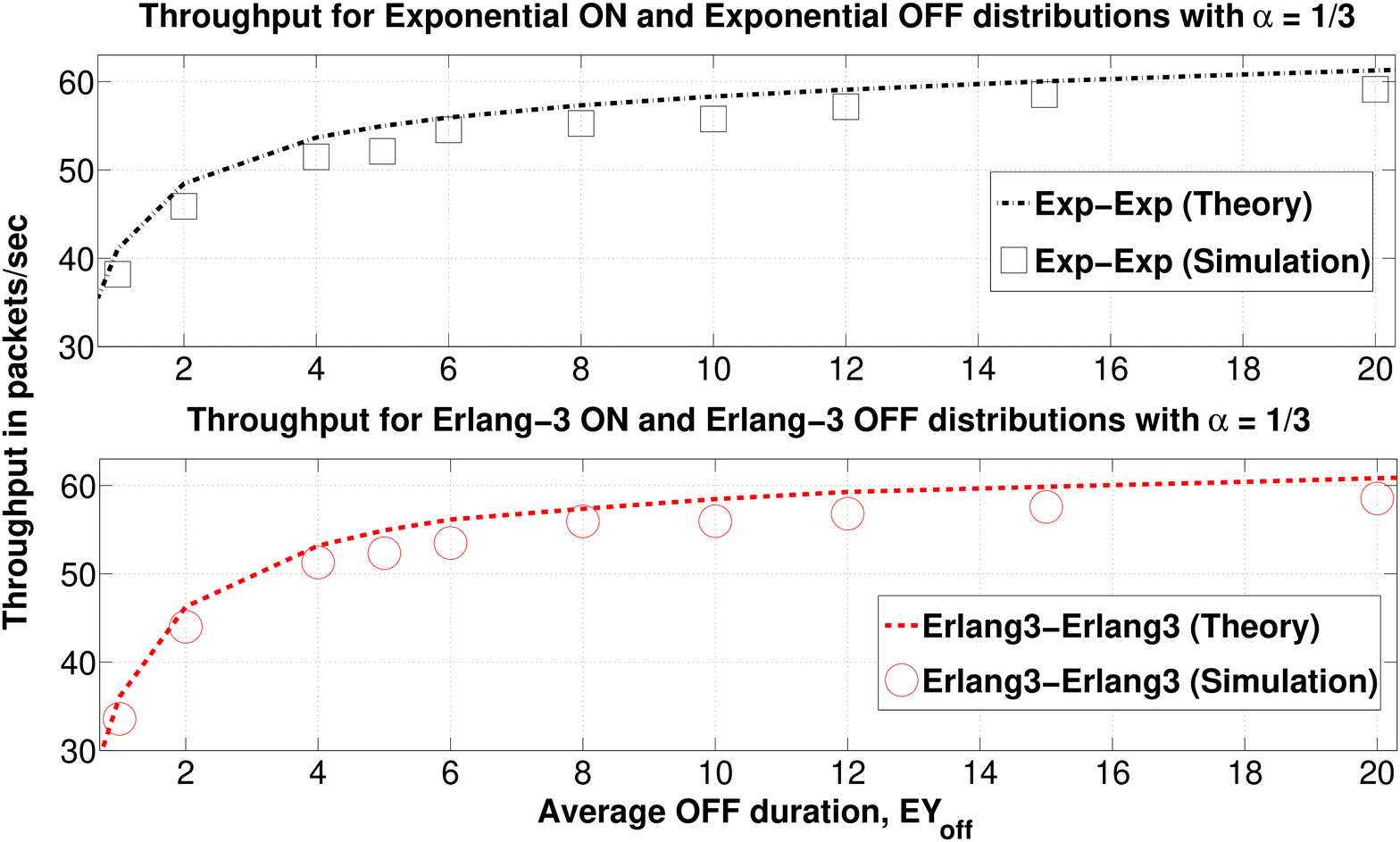}
\end{tabular}
\caption{Probability of RTO, throughput with different ON-OFF distributions.}
\label{fig:BX_IX}
\end{figure}

\begin{figure}
\centering
\begin{tabular}{c}
\includegraphics[scale=0.25, trim = 70 0 140 5, clip=true]{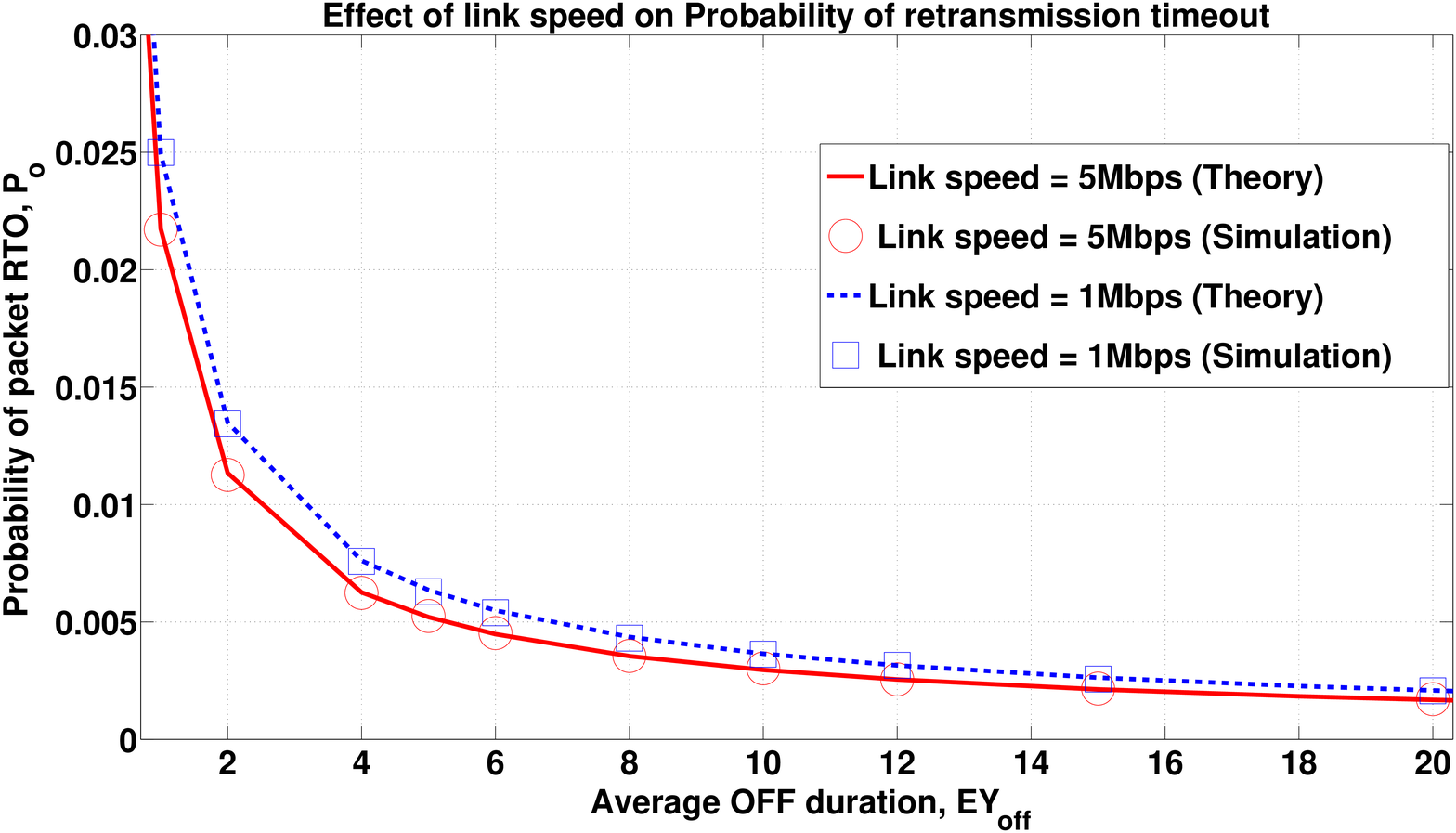}  \\ \\ \\
\includegraphics[scale=0.25, trim = 70 0 140 5, clip=true]{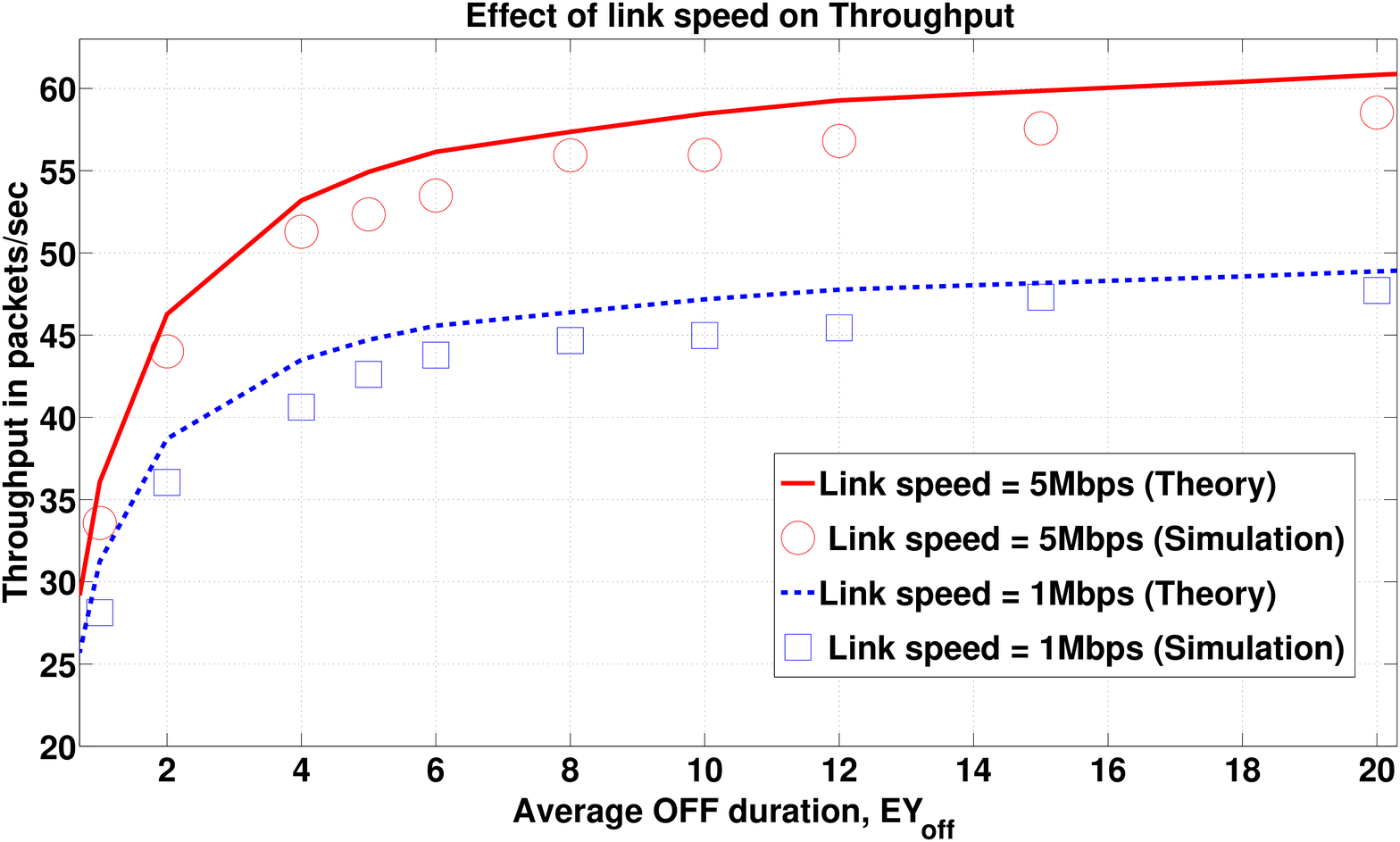}
\end{tabular}
\caption{Probability of RTO, throughput with different link speeds.}
\label{fig:BX_IX_5M_1M}
\end{figure}
\section{Multiple TCP Connections}
\label{sec:multipleTCP}
In this section, we consider the case when multiple secondary TCP connections share a CR channel. Thus the ON-OFF durations for all connections are same. However these connections are subject to different packet error rates as their channel gains may be different. Also these connections can possibly go through different routes; therefore they may have different round trip times. We first consider the case where the queuing delays are negligible. Then, the processes $\{(S_k, J_k, W_k, H_k)\}$ for each TCP can be considered individually with no effect on each other. The probability of timeout and throughput can be computed using equations \eqref{eqn:PKT_RTO} and \eqref{eqn:T_palmcalculus} respectively.  The following simulations show that this model works fine for system parameters considered here.

We consider $6$ secondary TCP connections with different RTTs and packet loss probabilities sharing an ON-OFF channel with Erlang-2 distributed ON and OFF periods with average ON duration = $20$ seconds and average OFF duration = $10$ seconds. The ON-OFF channel has link speed of $10$ Mbps and the other links are set to $1$ Gbps. The TCP packet sizes are $1050$ bytes. 
The packet error rates of the different TCP connections are as given in Table \ref{tbl:multipleTCP}. We compare the throughput and probability of timeout, $P_o$ obtained using ns2 simulations with theoretical results in Table \ref{tbl:multipleTCP}. 
The difference between the simulation results and analytical model results is less than $9\%$. 


\begin{table}
\centering
\caption{Multiple TCP flows with negligible queuing.}
\begin{tabular}{|c|c|c|c|c|c|}
\hline
$PER_i$ & $RTT_i$ & Throughput & Throughput & $P_o$ & $P_o$ \\
& (sec) & (ns2) &  (Theoretical) & (ns2) &  (Theoretical) \\
\hline
$0.01$   &	$0.05$ &	 $108.9$ & $118.6$ & $0.00148$ & $0.00136$\\
$0.01$   & 	$0.10$ &	 $54.9$ & $58.5$ & $0.00291$ & $0.00274$\\
$0.01$   &	$0.20$ &	 $26.5$ & $28.2$ & $0.00591$ & $0.00566$\\
$0.005$  &	$0.05$ &	 $156.8$ & $167.4$ & $0.00103$ & $0.00097$\\
$0.005$  &	$0.10$ &	 $78.9$ & $81.8$ & $0.00205$ & $0.00197$\\
$0.005$  &	$0.20$ &	 $37.4$  & $38.6$ & $0.00425$ & $0.00416$\\
\hline
\end{tabular}
\label{tbl:multipleTCP}
\end{table}

When the network has non-negligible queuing and all the flows have same RTT, we can extend the model from Section \ref{sec:phase_type}. Suppose there are $N$ TCP flows in the network and they share the ON-OFF channel. If the ON and OFF periods are phase-type. then $ (S_k, J_k, (W_k^j, H_k^j)_{j \in \{1, 2, \cdots, N\}})$ denotes the state of the system and forms a Markov chain, where index $j$ represents the TCP connection $j$. The RTT for the different flows at the end of the $k^{th}$ RTT is given by $\max\{ \Delta, \frac{\sum_j {W_k^j}}{\mu}\}$ where $\mu$ is the bottleneck link speed (in packets/sec) and $\Delta$ is the propagation delay (in sec). Our simulation results validate our model assumptions.

We consider $3$ secondary TCP connections with $\Delta = 0.1$ seconds. The maximum window size for all the flows is set to $20$ packets. The ON-OFF periods are both exponentially distributed with average ON duration = $20$ seconds and average OFF duration = $10$ seconds. The ON-OFF channel has link speed of $2$ Mbps (this causes non-negligible queuing) and the other links are set to $1$ Gbps. The TCP packet sizes are $1050$ bytes. The packet error rates of the different TCP connections are as given in Table \ref{tbl:multipleTCP_2}.  We compare the throughput and probability of timeout, $P_o$ obtained using ns2 simulations with theoretical results in Table \ref{tbl:multipleTCP_2}. In this case, the errors are less than $13\%$.

\begin{table}
\centering
\caption{Multiple TCP flows with non-negligible queuing.}
\begin{tabular}{|c|c|c|c|c|c|}
\hline
$PER_i$ & $RTT_i$ &  Throughput & Throughput & $P_o$ & $P_o$ \\
& (sec) & (ns2) &  (Theoretical) & (ns2) &  (Theoretical) \\
\hline
$0.01$ & $0.1$  &	 $30.83$ & $28.19$ & $0.00404$ & $0.00455$\\
$0.003$ & $0.1$   &	 $42.06$ & $38.51$ & $0.00320$ & $0.00336$\\
$0.001$ & $0.1$  &	 $47.16$ & $42.80$ & $0.00296$ & $0.00303$\\
\hline
\end{tabular}
\label{tbl:multipleTCP_2}
\end{table}

\section{Conclusions}
\label{sec:conclusion}
We have developed an analytical Markov model for a TCP flow over an ON-OFF channel with random losses. For the Markov model, we assume that the ON and OFF periods are both exponential. We then extend our model to include phase-type ON and phase-type OFF periods. We have also considered the case with more general ON and OFF periods and proved the stationarity of the system using regenerative process theory. We have compared the results viz., probability of retransmission timeout and secondary TCP throughput obtained using the theoretical models with ns2 simulations and showed that these match quite well. Finally, we considered the scenario where multiple secondary TCP connections share the ON-OFF channel. The theoretical results for this scenario also match well with simulations.
\bibliographystyle{IEEEtran} 
\bibliography{tcp-cognitive}
\end{document}